\def\colorful{0}
\newif\ifhyper\IfFileExists{hyperref.sty}{\hypertrue}{\hyperfalse}
\ifhyper\usepackage{hyperref}\fi
\def\nnewcolor{1}
\newcommand{\new}[1]{{\color{red} #1}}
\newcommand{\new}[1]{{#1}}
\newtheorem{theorem}{Theorem}[section]
\newtheorem{lemma}[theorem]{Lemma}
\newtheorem{informal theorem}[theorem]{Theorem (informal statement)}
\newtheorem{proposition}[theorem]{Proposition}
\newtheorem{corollary}[theorem]{Corollary}
\newtheorem{fact}[theorem]{Fact}
\theoremstyle{definition}
\newtheorem{definition}[theorem]{Definition}
\newcommand{\dtv}{d_{\mathrm {TV}}}
\newcommand{\Var}{\mathbf{Var}}
\newcommand{\E}{\mathbf{E}}
\newcommand{\Cov}{\mathbf{Cov}}
\newcommand{\eps}{\epsilon}
\newcommand{\tr}{\mathrm{tr}}
\newcommand{\R}{\mathbb{R}}
\newcommand{\pr}{\mathbf{Pr}}
\newcommand{\poly}{\mathrm{poly}}
\newcommand{\D}{\mathcal{D}}
\newcommand{\wh}{\widehat}
\title{Recent Advances in \\ Algorithmic High-Dimensional Robust Statistics\footnote{This article is an
expanded version of an invited chapter entitled ``Robust High-Dimensional Statistics'' in the
book ``Beyond the Worst-Case Analysis of Algorithms'' to be published by Cambridge University Press.}}
\author{
Ilias Diakonikolas\thanks{Supported by NSF Award CCF-1652862 (CAREER) and a Sloan Research Fellowship.}\\
University of Wisconsin-Madison\\
{\tt ilias@cs.wisc.edu}\\
\and
Daniel M. Kane\thanks{Supported by NSF Award CCF-1553288 (CAREER) and a Sloan Research Fellowship.}\\
University of California, San Diego\\
{\tt dakane@cs.ucsd.edu}\\
}
\begin{document}

\maketitle

\begin{abstract}
Learning in the presence of outliers is a fundamental problem in statistics.
Until recently, all known efficient unsupervised learning algorithms were very sensitive to outliers in high dimensions.
In particular, even for the task
of 
robust mean estimation under natural distributional assumptions,
no efficient algorithm was known.
Recent work in theoretical computer science gave the first efficient robust estimators for a
number of fundamental statistical tasks, including mean and covariance estimation.
Since then, there has been a flurry of research activity on algorithmic
high-dimensional robust estimation in a range of settings.
In this survey article, we introduce the core ideas and algorithmic techniques
in the emerging area of algorithmic high-dimensional robust statistics
with a focus on robust mean estimation. We also provide an overview
of the approaches that have led to computationally efficient robust estimators
for a range of broader statistical tasks and discuss new directions and opportunities for future work.
\end{abstract}

\section{Introduction} \label{sec:intro}

\subsection{Background} \label{ssec:background}

Consider the following basic statistical task: Given $n$ independent samples from an unknown mean
spherical Gaussian distribution $\mathcal{N}(\mu, I)$
on $\R^d$, estimate its mean vector {$\mu$} within small $\ell_2$-norm.
It is not hard to see that the empirical mean has $\ell_2$-error at most $O(\sqrt{d/n})$ from $\mu$ with high probability.
Moreover, this error upper bound is best possible among all $n$-sample estimators.

The Achilles heel of the empirical estimator is that it crucially relies on the assumption that the
observations were generated by a spherical Gaussian. The existence of even a {\em single}
outlier can arbitrarily compromise this estimator's performance.
However, the Gaussian assumption is only ever approximately valid, as real datasets are typically exposed to some
source of contamination. Hence, any estimator that is to be used in practice
must be {\em robust} in the presence of outliers.


Learning in the presence of outliers is an important goal in
statistics and has been studied in the robust statistics community
since the 1960s~\cite{Tukey60, Huber64} (see~\cite{HampelEtalBook86, Huber09} for introductory
statistical textbooks on the topic). In recent years, the problem of designing robust and computationally
efficient estimators for various high-dimensional statistical tasks has become
a pressing challenge in a number of data analysis applications.
These include the analysis of biological datasets, where natural outliers are
common~\cite{RP-Gen02, Pas-MG10, Li-Science08}
and can contaminate the downstream statistical analysis,
and {\em data poisoning attacks} in machine learning~\cite{Barreno2010}, where even a small
fraction of fake data (outliers) can substantially degrade
the quality of the learned model~\cite{BiggioNL12, SteinhardtKL17}.

Classical work in robust statistics
pinned down the 
minimax risk of high-dimensional robust estimation in several basic settings of interest.
{In contrast, until very recently, even the most basic computational questions in this field
were poorly understood.}
For example, the Tukey median~\cite{Tukey75} is a sample-efficient robust mean estimator
for spherical Gaussian distributions. However, it is NP-hard
to compute in general~\cite{JP:78} and the 
heuristics proposed to approximate it degrade
in the quality of their approximation as the dimension scales.
Similar hardness results have been shown~\cite{Bernholt, HardtM13} for
essentially all known classical estimators in robust statistics.

Until recently, all known computationally efficient estimators
could only tolerate a negligible fraction of outliers in high dimensions,
even for the basic task of mean estimation. Recent work
by Diakonikolas, Kamath, Kane, Li, Moitra, and Stewart~\cite{DKKLMS16}, and by
Lai, Rao, and Vempala~\cite{LaiRV16}
gave the first efficient robust estimators for various high-dimensional unsupervised learning
tasks, including mean and covariance estimation. Specifically,~\cite{DKKLMS16}
obtained the first polynomial-time robust estimators with {\em dimension-independent} error guarantees,
i.e., with error scaling only with the fraction of corrupted samples and {\em not} with the dimensionality of the data.
Since the dissemination of these works~\cite{DKKLMS16, LaiRV16}, there has been significant research activity
on designing computationally efficient robust estimators in a variety of settings.

\subsection{Contamination Model} \label{ssec:contam}

Throughout this article, we focus on the following model of data contamination
that generalizes several other existing models:
\begin{definition}[Strong Contamination Model] \label{def:adv}
Given a parameter $0< \eps < 1/2$ and a distribution family $\mathcal{D}$ on $\R^d$,
the \emph{adversary} operates as follows: The algorithm specifies a
number of samples $n$, and $n$ samples are drawn from some unknown $\new{X} \in \mathcal{D}$.
The adversary is allowed to inspect the samples, remove up to $\eps n$ of them
and replace them with arbitrary points. This modified set of $n$ points is then given as input
to the algorithm. We say that a set of samples is {\em $\eps$-corrupted}
if it is generated by the above process.
\end{definition}

The contamination model of Definition~\ref{def:adv} can be viewed as
a semi-random input model:
First, nature draws a set $S$ of i.i.d. samples from a statistical model of interest,
and then an adversary is allowed to change the set $S$ in a bounded way to obtain
an $\eps$-corrupted set $T$. The parameter $\eps$ is the proportion of contamination
and quantifies the power of the adversary. Intuitively, among our samples,
an unknown $(1-\eps)$ fraction are generated from a
distribution of interest and are called {\em inliers}, and the rest are called {\em outliers}.

One can consider less powerful adversaries,
giving rise to weaker contamination models. An adversary
may be (i) adaptive or oblivious to the inliers, (ii) only allowed to
add outliers, 
or only allowed to remove inliers, or allowed to do both.
We provide some examples of natural and well-studied such models in the following paragraphs.

In {\em Huber's contamination model}~\cite{Huber64}, the adversary is oblivious to the inliers
and is only allowed to add outliers. More specifically, in Huber's model, the adversary
generates samples from a mixture distribution $P$ of the form $P = (1-\eps) \new{X} + \eps N$, where
$\new{X} \in \mathcal{D}$ is the unknown target distribution and $N$ is an adversarially chosen noise distribution.
Another natural contamination model very common in theoretical computer science allows the adversary to perturb the distribution
$X$ by at most $\eps$ in total variation distance, i.e., the adversary generates samples
from a distribution $Y$ that satisfies $\dtv(Y, X) \leq \eps$.  Intuitively,
the adversary in this model is oblivious to the inliers and is allowed to both add outliers
and remove inliers. This model is very similar to a contamination model
proposed by Hampel~\cite{Hampel71}. We note that contamination in total variation distance
is strictly stronger than Huber's model. More broadly, one can naturally modify this model to study
model misspecification with respect to different loss functions (see, e.g.,~\cite{ZHS19}).

In computational learning theory, the contamination model of Definition~\ref{def:adv}
is related to the {\em agnostic model}~\cite{Haussler:92, KSS:94},
where the goal is to learn a labeling function whose agreement
with some underlying target function is close to the best possible,
among all functions in some given class.
An important difference with our setting is that the agnostic model requires that we fit all the data,
while in our robust setting we only want to fit the uncorrupted data.

The strong contamination model can be viewed as the unsupervised analogue of
the challenging {\em nasty noise model}~\cite{BEK:02} (itself a strengthening of the malicious
model~\cite{Valiant:85short, keali93}).
In the nasty model, an adversary is allowed to corrupt an $\eps$-fraction of both the labels
and the samples, and the goal of the learning algorithm
is to output a hypothesis with small misclassification error
with respect to the clean data generating distribution.




In robust mean estimation, given an $\eps$-corrupted set
of samples from {a well-behaved distribution} (e.g., $\mathcal{N}(\mu, I)$),
we want to output a vector $\widehat{\mu}$ that closely approximates the unknown mean vector $\mu$.
A natural choice of metric between the means for identity covariance distributions is the
$\ell_2$-error $\|\widehat{\mu}-\mu \|_2$, and in this article we focus on designing robust estimators
minimizing $\|\widehat{\mu}-\mu \|_2$. We note however that the same algorithms typically lead to small
Mahalanobis distance, i.e., $\|\widehat{\mu}-\mu \|_{\Sigma} = |(\widehat{\mu}-\mu)^T \Sigma^{-1} (\widehat{\mu}-\mu)|^{1/2}$,
when the underlying distribution has covariance $\Sigma$.

One can analogously define robust estimation for other parameters
of high-dimensional distributions (e.g., covariance matrix and higher-order moment tensors)
with respect to natural loss functions (e.g., Frobenius norm, spectral norm).
A more general statistical task is that of {\em robust density estimation}:
Given an $\eps$-corrupted set of samples from an unknown distribution $\new{X} \in \mathcal{D}$,
output a hypothesis distribution $H$ (not necessarily in $\mathcal{D}$) such that the total variation distance $\dtv(H, \new{X})$
is minimized. We note that robust density estimation and robust parameter estimation are closely related to each other. For many
natural parametric distributions, the latter can be reduced to the former for an appropriate choice of metric
between the parameters.

In all these settings, the goal is to design computationally efficient learning algorithms
that achieve {\em dimension-independent} error, i.e.,
error that scales only with the contamination parameter $\eps$, but not with the dimension $d$.
The information-theoretic limits of robust estimation depend on our assumptions about the distribution
family $\mathcal{D}$. In the following subsection, we provide the basic relevant background.

\subsection{Basic Background: Sample Efficient Robust Estimation} \label{ssec:sample-robust}

Before we proceed with presenting computationally efficient robust estimators in the next sections,
we provide a few basic facts on the information-theoretic limits of robust estimation.
For concreteness, we focus here on robust mean estimation. We note that similar arguments
can be applied to various other parameter estimation tasks. The interested reader is referred to~\cite{DKKLMS16}
and to~\cite{chen2018, liu2019} (and references therein) for recent
information-theoretic work from the statistics community.


We first note that some assumptions on the underlying distribution family $\mathcal{D}$
are necessary for robust mean estimation to be information-theoretically possible.
Consider for example, the family $\mathcal{D} = \{D_x, x \in \R\}$, where $D_x$
is a probability distribution on the real line with only one point $x \in \R$ having positive mass
$\pr[D_x = x] = \eps >0$ and such that $\E[D_x] = x$. While estimating the mean of an arbitrary
distribution in $\mathcal{D}$ is straightforward without corruptions (by taking samples until we
see a sample twice, which must be the true mean), an adversary can erase all
information about the mean in an $\eps$-corrupted sample from $D_x$. Indeed, 
an adversary can delete the samples at $x$ and
move them at an arbitrary location to arbitrarily corrupt the sample mean.

Typical assumptions on the family $\mathcal{D}$ are either parametric (e.g., $\mathcal{D}$ could be the family
of all Gaussian distributions) or are defined by concentration properties (e.g., each distribution in $\mathcal{D}$ satisfies sub-gaussian concentration) or conditions on low-degree moments (e.g.,  each distribution in $\mathcal{D}$ has appropriately bounded
higher-order moments).

Another basic observation is that, in contrast to the uncorrupted setting,
in the contaminated setting of Definition~\ref{def:adv}
it is {\em not} possible to obtain consistent estimators ---  that is,
estimators with error converging to zero in probability as the sample size increases indefinitely.
Typically, there is an information-theoretic
limit on the minimum attainable error that depends on the proportion of contamination $\eps$
and structural properties of the underlying distribution family.

In particular, for the robust mean estimation of a high-dimensional Gaussian, we have:

\begin{fact} \label{fact:error-limit}
For any $d \geq 1$,  any robust estimator for the mean of $ \new{X}= \mathcal{N}(\mu_X, I)$ on $\R^d$,
must have $\ell_2$-error $\Omega(\eps)$, even in Huber's contamination model.
\end{fact}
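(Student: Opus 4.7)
The natural approach is a two-point Le~Cam style indistinguishability argument: exhibit two spherical Gaussians whose means differ by $\Omega(\eps)$ but whose Huber-contaminated versions are literally the same distribution. Any estimator applied to samples from the common contaminated distribution must then incur $\ell_2$-error at least half the mean gap on one of the two instances, regardless of sample size or computation.

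Concretely, I would take $P_1 = \mathcal{N}(0, I)$ and $P_2 = \mathcal{N}(c \eps \, \be_1, I)$ on $\R^d$, where $\be_1$ is the first standard basis vector and $c > 0$ is a small absolute constant to be fixed. A direct calculation reduces the total variation distance to the one-dimensional case (the two Gaussians differ only along $\be_1$), and one checks that for unit-variance Gaussians the TV distance grows linearly in the mean gap for small gaps, giving $\dtv(P_1, P_2) \leq C c \eps$ for an absolute constant $C$. Choosing $c$ small enough yields $\delta \eqdef \dtv(P_1,P_2) \leq \eps/(1-\eps)$.

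Next I would construct the noise distributions explicitly. Write the signed measure $P_1 - P_2 = f_+ - f_-$ as the Jordan decomposition into its positive and negative parts, so $f_\pm \geq 0$ have disjoint supports and $\littleint f_+ = \littleint f_- = \delta$. Fix any probability measure $R$ on $\R^d$ and set
\[
N_1 \;=\; \tfrac{1-\eps}{\eps}\, f_- \;+\; \bigl(1 - \tfrac{(1-\eps)\delta}{\eps}\bigr)\, R, \qquad N_2 \;=\; \tfrac{1-\eps}{\eps}\, f_+ \;+\; \bigl(1 - \tfrac{(1-\eps)\delta}{\eps}\bigr)\, R.
\]
Both are genuine probability measures exactly because $\delta \leq \eps/(1-\eps)$. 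By construction,
\[
(1-\eps) P_1 + \eps N_1 \;=\; (1-\eps)\bigl(P_1 + f_-\bigr) + \bigl(\eps - (1-\eps)\delta\bigr) R \;=\; (1-\eps)\max(P_1, P_2) + \bigl(\eps - (1-\eps)\delta\bigr) R,
\]
and a symmetric calculation shows that $(1-\eps) P_2 + \eps N_2$ equals the same expression. Thus the two Huber-contaminated distributions coincide exactly, and since $N_1, N_2$ are chosen obliviously to the samples, this is a valid pair of instances in Huber's model.

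Because the two contaminated distributions are identical as measures on $\R^d$, no estimator — even one with unlimited samples and computation — can produce different outputs on the two hypotheses $\mu = 0$ and $\mu = c \eps \be_1$. Hence on at least one of the two instances the estimator must return a vector at $\ell_2$-distance at least $\tfrac{c}{2}\eps = \Omega(\eps)$ from the true mean. The only mildly nontrivial step is the linear-in-$\eps$ upper bound on $\dtv(P_1,P_2)$, which fixes the constant $c$ and makes the noise construction feasible; everything else is a clean algebraic identity.
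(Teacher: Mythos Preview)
Your proposal is correct and follows exactly the same two-point indistinguishability argument the paper sketches: pick two spherical Gaussians with means $\Theta(\eps)$ apart and exhibit noise distributions $N_1,N_2$ making the two Huber mixtures coincide. The paper merely asserts that such $N_1,N_2$ exist, whereas you spell out the explicit Jordan-decomposition construction and the TV bound needed to make it go through; this is a welcome elaboration but not a different approach.
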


This fact can be shown as follows:
Given two distinct distributions $\mathcal{N}(\mu_1, I)$ and $\mathcal{N}(\mu_2, I)$ with $\|\mu_1-\mu_2\|_2  \new{= \Theta(\eps)}$,
the adversary constructs two noise distributions $N_1, N_2$ on $\R^d$ such that
$Y = (1-\eps)\mathcal{N}(\mu_1, I) + \eps N_1 = (1-\eps)\mathcal{N}(\mu_2, I) + \eps N_2$.
Consequently, even in the infinite sample regime,
any algorithm can at best learn that its samples are coming from $Y$,
but will be unable to distinguish between the cases
where the real distribution is $\mathcal{N}(\mu_1, I)$
and where it is $\mathcal{N}(\mu_2, I)$.
This proves Fact~\ref{fact:error-limit}.


If the target distribution $\new{X}$ is allowed to come from a broader class of distributions (such as allowing any distribution with subgaussian tails, or any distribution with bounded covariance), the situation is even worse. If one can find two distributions $\new{X}$ and $\new{X}'$ in the desired class with $\dtv(\new{X},\new{X}')\leq \eps$, it becomes information-theoretically impossible for an algorithm to distinguish between the two. However, if the difference between $\new{X}$ and $\new{X}'$ is concentrated in the tails of the distribution, then $\new{X}$ and $\new{X}'$ might have very different means.
This implies that for the class of distributions with sub-gaussian tails (and identity covariance)
we cannot hope to learn the mean to $\ell_2$-error better than $\Omega(\eps\sqrt{\log(1/\eps)})$;
and for the class of distributions with covariance $\Sigma$ bounded by $\sigma^2 I$, we
cannot expect to do better than $\Omega(\sigma \sqrt{\eps})$. It turns out that these bounds
are information-theoretically optimal, and in fact as we will see, the means of such distributions
can be robustly estimated to these errors in many cases.


\medskip

The problem of robust mean estimation for $\mathcal{N}(\mu, I)$ seems so innocuous that one could naturally
wonder why simple approaches do not work. In the one-dimensional case,
it is well-known that the median is a robust estimator of the mean, matching the lower bound of Fact~\ref{fact:error-limit}.
Specifically, it is easy to show that the median $\widehat{\mu}$ of a multiset of $n = \Omega(\log(1/\tau)/\eps^2)$
$\eps$-corrupted samples from a one-dimensional Gaussian $\mathcal{N}(\mu, 1)$
satisfies $|\widehat{\mu} - \mu| \leq O(\eps)$ with probability at least $1-\tau$.

In high dimensions, the situation is more subtle.
There are many reasonable ways to attempt to generalize the median as
a robust estimator in high dimensions, but unfortunately, most natural ones lead
to $\ell_2$-error of $\Omega(\eps \sqrt{d})$ in $d$ dimensions, even in the infinite sample regime
(see, e.g.,~\cite{DKKLMS16, LaiRV16}).

Perhaps the most obvious high-dimensional generalization of the median is the {\em coordinate-wise median}.
Here the $i$-th coordinate of the output is the median of the $i$-th
coordinates of the input samples. This estimator guarantees that every coordinate of
the output is within $O(\eps)$ of the corresponding coordinate of the
input. This estimator suffices for obtaining small $\ell_{\infty}$-error, but if one wants
$\ell_2$-error (which is natural in the case of Gaussians), then there exist instances
where the coordinate-wise median has $\ell_2$-error as large as $\Omega(\eps \sqrt{d})$.
Another potential way to generalize the median to high dimensions is via the {\em geometric median},
i.e., the point $x^{\ast}$ that minimizes $\sum_{i} \|x^{(i)} - x^{\ast}\|_2$. Unfortunately,
the geometric median can also produce $\ell_2$-error of $\Omega(\eps \sqrt{d})$ if the adversary
adds the $\eps$-fraction of the outliers all off from the mean in the same
direction.

A third high-dimensional generalization of the median relies on the observation that
taking the median of {\em any} univariate projection of our input points
gives us an approximation to the projected mean. Finding a mean vector
that minimizes the error over the {\em worst direction} can actually
be used to obtain $\ell_2$-error of $O(\eps)$ with high probability.
In other words, it is possible to reduce the high-dimensional robust mean estimation
problem to a collection of one-dimensional robust mean estimation problems.
This is the underlying idea in Tukey's median~\cite{Tukey75}, which is known to be
a robust mean estimator for spherical Gaussians and for more general symmetric distributions. But unfortunately, the Tukey median relies on combining information for infinitely many directions, and is unsurprisingly NP-Hard to compute in general.

The following proposition gives a computationally inefficient robust mean estimator
matching the lower bound of Fact~\ref{fact:error-limit}:

\begin{proposition} \label{prop:sample-robust}
There exists an algorithm that, on input an $\epsilon$-corrupted set of samples
from $\new{X} = \mathcal{N}(\mu_X, I)$ of size $n = \Omega((d+\log(1/\tau))/\eps^2)$,
runs in $\poly(n, 2^d)$ time, and
outputs $\widehat{\mu} \in \R^d$ such that with probability at least $1-\tau$, it holds that
$\|\widehat{\mu}  - \mu_X \|_2 = O(\eps)$.
\end{proposition}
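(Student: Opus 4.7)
The plan is to realize a brute-force version of Tukey's idea by reducing to one-dimensional robust median estimation along a finite net of directions, exploiting the $2^d$-time budget. First, construct a (say) $1/2$-net $N \subseteq S^{d-1}$ of unit directions; a standard volume argument gives $|N| \le 2^{O(d)}$. For each $v \in N$, compute the empirical median $m_v$ of the projections $\langle x^{(i)}, v\rangle$. Output any $\widehat{\mu} \in \R^d$ satisfying $|\langle \widehat{\mu}, v\rangle - m_v| \le C\eps$ for every $v \in N$, where $C$ is a large constant. This is a feasibility problem defined by $2^{O(d)}$ slabs in $\R^d$ and can be solved in $\poly(n, 2^d)$ time via linear programming; feasibility will follow from the analysis of Step 2 applied to $\widehat{\mu} = \mu_X$.

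For a fixed direction $v$, the projected inliers are i.i.d. $\mathcal{N}(\langle\mu_X, v\rangle, 1)$. In the clean case, standard one-dimensional concentration shows that the empirical median lies within $O(\sqrt{\log(1/\delta)/n})$ of $\langle\mu_X,v\rangle$ with probability $1-\delta$. In the $\eps$-corrupted case, moving $\eps n$ samples can shift the median by at most the $\eps$-quantile of $\mathcal{N}(0,1)$, which is $O(\eps)$. So for each fixed $v$, with probability $1-\delta$,
\[
|m_v - \langle\mu_X, v\rangle| \;\le\; O\!\left(\eps + \sqrt{\log(1/\delta)/n}\right).
\]
Applying this with $\delta = \tau/|N|$ and union bounding over $v \in N$ gives $\log(1/\delta) = O(d + \log(1/\tau))$, and with $n = \Omega((d + \log(1/\tau))/\eps^2)$ the bound becomes $|m_v - \langle\mu_X,v\rangle| \le C\eps/2$ for every $v \in N$ simultaneously with probability at least $1-\tau$. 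This makes $\mu_X$ itself feasible for the LP, so the output $\widehat{\mu}$ satisfies $|\langle \widehat{\mu} - \mu_X, v\rangle| \le 2C\eps$ for every $v \in N$.

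To transfer this bound to all directions, let $w = \widehat{\mu} - \mu_X$ and set $u = w/\|w\|_2$ (assuming $w \neq 0$). Choose $v \in N$ with $\|u-v\|_2 \le 1/2$. Then
\[
\|w\|_2 \;=\; \langle w, u\rangle \;\le\; \langle w, v\rangle + \|w\|_2 \cdot \|u-v\|_2 \;\le\; 2C\eps + \tfrac{1}{2}\|w\|_2,
\]
so $\|w\|_2 \le 4C\eps = O(\eps)$, as required.

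The main obstacle is the second step: making the one-dimensional median analysis robust enough to survive a union bound over a net of exponential size. The key quantitative inputs are (i) the $O(\eps)$ bias bound from the Gaussian quantile function, which is dimension-free and robust to the adversary's choice of $\eps n$ corruptions, and (ii) subgaussian concentration of the sample median giving the $\sqrt{\log(1/\delta)/n}$ term — it is precisely these two ingredients, together with $|N| = 2^{O(d)}$, that force the sample complexity $n = \Omega((d+\log(1/\tau))/\eps^2)$ stated in the proposition. Everything else (net construction, LP feasibility, net-to-supremum passage) is routine.
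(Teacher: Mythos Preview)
Your proposal is correct and is essentially the same approach as the paper's: the paper sketches the proof as ``using a one-dimensional robust mean estimator to estimate $v\cdot\mu$, for an appropriate net of $2^{O(d)}$ unit vectors $v\in\R^d$, and then combin[ing] these estimates (by solving a large linear program),'' with the median as the one-dimensional estimator in the Gaussian case. You have fleshed out precisely this outline---net of size $2^{O(d)}$, per-direction median with $O(\eps)$ robustness bias plus $O(\sqrt{\log(1/\delta)/n})$ fluctuation, union bound over the net yielding the $(d+\log(1/\tau))/\eps^2$ sample complexity, LP to combine, and the standard net-to-sup argument---so there is nothing to distinguish in approach.
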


The algorithm to establish Proposition~\ref{prop:sample-robust} proceeds by
using a one-dimensional robust mean estimator to estimate $v \cdot \mu$,
for an appropriate net of $2^{O(d)}$ unit vectors $v \in \R^d$, and
then combines these estimates (by solving a large linear program)
to obtain an accurate estimate of $\mu$.
When $\new{X} = \mathcal{N}(\mu_X, I)$, our one-dimensional robust mean estimator
will be the median, giving the $O(\eps)$ error in Proposition~\ref{prop:sample-robust}.

We note that the same procedure is applicable to other distribution families as well (even non-symmetric ones),
as long as there is an accurate robust mean estimator for each univariate projection.
Specifically, if $X$ has tails bounded by those of a Gaussian with covariance $ \sigma^2 I$,
one can use the {\em trimmed mean} for each univariate projection. This gives error of $O(\sigma \eps \sqrt{\log(1/\eps)} )$.
If $X$ is only assumed to have bounded covariance matrix ($\Sigma_X \preceq \sigma^2 I$),
we can similarly use the trimmed mean, which leads to total error of $O(\sigma \sqrt{\eps})$.
By the discussion following Fact~\ref{fact:error-limit}, both these upper bounds are optimal, within constant factors,
under the corresponding assumptions.


\subsection{Structure of this Article} \label{ssec:structure}

In Section~\ref{sec:robust-mean}, we provide a unified presentation of
two related algorithmic techniques that gave
the first computationally efficient algorithms for high-dimensional robust mean estimation.
Section~\ref{sec:robust-mean} is the main technical section of this article
and showcases a number of core ideas and techniques that can be applied
to several high-dimensional robust estimation tasks. Section~\ref{sec:gen}
provides an overview of recent algorithmic progress for more general robust estimation tasks.
Finally, in Section~\ref{sec:conc} we conclude with a few general directions for future work.

\subsection{Preliminaries and Notation} \label{ssec:prelims}

For a distribution $X$, we will use the notation $x \sim X$
to denote that $x$ is a sample drawn from $X$. For a finite set $S$, we will write $x \sim_u S$ to
denote that $x$ is drawn from the uniform distribution on $S$.
The probability of event $\mathcal{E}$ will be denoted by $\pr[\mathcal{E}]$.

We will use $\E[X]$ and $\Var[X]$ to denote the expectation and the variance of random variable $X$.
If $X$ is multivariate, we will denote by $\Cov[X]$ its covariance matrix.
We will also use the notation $\mu_X$ and $\Sigma_X$ for the mean and covariance of $X$.
Similarly, for a finite set $S$, we will denote by $\mu_S$ and $\Sigma_S$ the sample
mean and sample covariance of $S$.

For a vector $v$, we will use $\|v\|_2$ to denote its $\ell_2$-norm. For a matrix $A$, we will
denote by $\|A\|_2$ and $\|A\|_F$ its spectral and Frobenius norms respectively, and
by $\tr(A)$ its trace.
We will denote by $\preceq$ the Loewner order between matrices.

We will use standard asymptotic notation $O(\cdot)$, $\Omega(\cdot)$.
The $\tilde{O}(\cdot)$ notation hides logarithmic factors in its argument.

\section{High-Dimensional Robust Mean Estimation} \label{sec:robust-mean}
In this section, we illustrate the main insights underlying
recent robust high-dimensional learning algorithms
by focusing on the problem of robust mean estimation.
The algorithmic techniques presented in this section were introduced in~\cite{DKKLMS16, DKK+17}.
Here we give a simplified and unified presentation that illustrates the key ideas
and the connections between them.

The objective of this section is to provide the intuition and
background required to develop robust learning algorithms
in an accessible way. As such, we will not attempt to optimize
the sample or computational complexities of {the} algorithms {presented},
other than to show that they are polynomial in the relevant parameters.

In the problem of robust mean estimation, we are given an $\eps$-corrupted set of samples
from a distribution $X$ on $\R^d$ and our goal is to approximate the mean of $X$, within small
error in $\ell_2$-norm. In order for such a goal to be information-theoretically possible,
it is required that $X$ belongs to a suitably well-behaved family of distributions.
A typical assumption is that $X$ belongs to a family whose moments are guaranteed to satisfy certain conditions,
or equivalently, a family with appropriate concentration properties.
In our initial discussion, we will use the running example of a spherical Gaussian,
although the results presented here hold in greater generality.
That is, the reader is encouraged to imagine that $X$ is of the form $\mathcal{N}(\mu,I)$,
for some unknown $\mu\in \R^d$.

\paragraph{Structure of this Section.}
In Section~\ref{ssec:intuition}, we discuss the basic intuition underlying {the presented} approach.
In Section~\ref{ssec:goodsets}, we will describe a stability condition
that is necessary for the algorithms in this section to
succeed.
In the subsequent subsections, we present two related algorithmic techniques taking advantage of the stability
condition in different ways. Specifically, in Section~\ref{ssec:convex-program}, we describe an algorithm
that relies on convex programming. In Section~\ref{ssec:filter},
we describe an iterative outlier removal technique, which has been the method of choice in practice.
In Section~\ref{ssec:lit-robust-mean}, we conclude with an overview of the relevant literature.

\subsection{Key Difficulties and High-Level Intuition}\label{ssec:intuition}

Arguably the most natural idea to robustly estimate the mean of a distribution
would be to identify the outliers and output the empirical mean of the remaining points.
The key conceptual difficulty is the fact
that, in high dimensions, the outliers cannot be identified at an individual level {even when
they move the mean significantly}.
In many cases, we can easily identify the ``extreme outliers'' ---
via a pruning procedure exploiting the concentration properties of the inliers.
Alas, such naive 
approaches typically do not suffice to obtain non-trivial error guarantees.

The simplest example illustrating this difficulty is that of a high-dimensional spherical Gaussian.
Typical samples will be at $\ell_2$-distance approximately $\Theta(\sqrt{d})$ from the true mean.
That is, we can certainly identify as outliers all points of our dataset at distance more than $\Omega(\sqrt{d})$
from the coordinate-wise median of the dataset. All other points cannot be removed via such a procedure,
as this could result in removing many inliers as well. However, by placing an $\eps$-fraction of outliers
at distance $\sqrt{d}$ in the same direction from the unknown mean, an adversary can corrupt the
sample mean by as much as $\Omega(\eps \sqrt{d})$.

This leaves the algorithm designer with a dilemma of sorts.
On the one hand, potential outliers at distance $\Theta(\sqrt{d})$ from the unknown mean
could lead to large $\ell_2$-error, scaling polynomially with $d$.
On the other hand, if the adversary places outliers at distance
approximately $\Theta(\sqrt{d})$ from the true mean in {\em random directions},
it may be information-theoretically impossible to distinguish them from the inliers.
The way out is the realization that {\em it is in fact not necessary to detect and remove all outliers}.
It is only required that the algorithm can detect the ``consequential outliers", i.e., the ones that
can significantly impact our estimates of the mean. 


Let us assume without loss of generality that there no extreme outliers (as these can be removed via
pre-processing). Then {\em the only way that the empirical mean can be far from the true mean is
if there is a ``conspiracy'' of many outliers, all producing errors in approximately the same direction.}
Intuitively, if our corrupted points are at distance $O(\sqrt{d})$ from the true mean
in random directions, their contributions will on average cancel out,
leading to a small error in the sample mean. In conclusion, it suffices 
to be able to detect these kinds of conspiracies of outliers.

The next key insight is simple and powerful.
Let $T$ be an $\epsilon$-corrupted set of points drawn from $\mathcal{N}(\mu, I)$.
If such a conspiracy of outliers substantially moves the empirical mean $\widehat{\mu}$ of $T$,
it must move $\widehat{\mu}$ in some direction. That is, there is a unit vector $v$ such
that these outliers cause $v\cdot(\widehat{\mu}-\mu)$ to be large.
For this to happen, it must be the case that these outliers are on average far from $\mu$ in the $v$-direction.
In particular, if an $\eps$-fraction of corrupted points in $T$ move the sample average of
$v\cdot (\new{U_T}-\mu)$, where $\new{U_T}$ is the uniform distribution on $T$, by more than $\delta$
($\delta$ should be thought of as small, but substantially larger than $\eps$),
then on average these corrupted points $x$ must have $v\cdot(x-\mu)$ at least $\delta/\eps$.
This in turn means that these corrupted points will have a contribution of at least
$\eps \cdot (\delta/\eps)^2 = \delta^2/\eps$ to the variance of $v \cdot \new{U_T}$. Fortunately,
this condition can actually be algorithmically detected! In particular, by computing
the top eigenvector of the sample covariance matrix, we can efficiently determine whether
or not there is any direction $v$ for which the variance of $v\cdot \new{U_T}$ is abnormally large.

The aforementioned discussion leads us to the overall structure of the algorithms {we will describe in this section}.
Starting with an $\eps$-corrupted set of points $T$ (perhaps weighted in some way), we
compute the sample covariance matrix and find the eigenvector $v^{\ast}$ with largest eigenvalue $\lambda^{\ast}$.
If $\lambda^{\ast}$ is not much larger than what it should be (in the absence of outliers),
by the above discussion, the empirical mean is close to the true mean, and we can return that as an answer.
Otherwise, we have obtained a particular direction {$v^{\ast}$} for which
we know that the outliers play an unusual role, i.e.,
behave significantly differently than the inliers. The distribution of the points projected
in the {$v^{\ast}$}-direction can then be used to perform some sort of outlier removal.
{The outlier removal procedure can be quite subtle and crucially depends
on our distributional assumptions about the clean data.}

\subsection{Good Sets and Stability}\label{ssec:goodsets}

In this section, we give a deterministic condition on the {uncorrupted} data
that we call \emph{stability} (Definition~\ref{def:stability}),
which is necessary for the algorithms presented here to succeed.
Furthermore, we provide an efficiently checkable condition
under which the empirical mean is certifiably close to the true mean (Lemma~\ref{lem:stability}).

Let $S$ be a set of $n$ i.i.d. samples drawn from $X$.
We will typically call these sample points good.
The adversary can select up to an $\eps$-fraction of points in $S$ and replace them with arbitrary points
to obtain an $\eps$-corrupted set $T$, which is given as input to the algorithm.
To establish correctness of an algorithm, 
we need to show that with high probability over the choice
of the set $S$, for any choice of how to corrupt the good samples that
the adversary makes, the algorithm will output an accurate estimate of the target mean.

To carry out such an analysis, it is convenient to explicitly state {a
collection of sufficient deterministic conditions on the set $S$. Specifically},
we will define a notion of a ``stable'' set,
quantified by the proportion of contamination $\eps$ and
the distribution $X$. The precise stability conditions vary considerably
based on the underlying estimation task and the assumptions on the distribution family of the uncorrupted data.
Roughly speaking, we require
that the uniform distribution over a stable set $S$ behaves similarly to the distribution $X$ with respect to
higher moments and, potentially, tail bounds. Importantly, we require that these conditions hold
even after removing an arbitrary $\eps$-fraction of points in $S$.

The notion of a stable set must have two critical properties:
(1) A set of $n$ i.i.d. samples from $X$ is stable with high probability,
when $n$ is at least a sufficiently large polynomial in the relevant parameters; and
(2) If $S$ is a stable set and $T$ is obtained from $S$ by changing at most an $\eps$-fraction of the points in $S$,
then {the} algorithm when run on the set $T$ will succeed.

The robust mean estimation algorithms that will be presented in this section
crucially rely on considering sample means and covariances. The following
stability condition is an important ingredient in the success criteria of {these algorithms:}
\begin{definition}[Stability Condition] \label{def:stability}
Fix $0< \eps<1/2$ and $\delta \geq \eps$.
A finite set $S \subset \R^d$ is \emph{$(\epsilon,\delta)$-stable} (with respect to a distribution $X$)
if for every unit vector $v \in \R^d$ and every $S'\subseteq S$ with $|S'| \geq (1-\epsilon)|S|$,
the following conditions hold:
\begin{enumerate}
\item $\left|\frac{1}{|S'|}\sum_{x\in S'} v\cdot( x-\mu_X)\right| \leq \delta \;,$ and
\item $\left|\frac{1}{|S'|}\sum_{x\in S'} (v\cdot( x-\mu_X))^2 - 1\right| \leq \delta^2/\epsilon.$
\end{enumerate}
\end{definition}


The aforementioned stability condition or a variant thereof is used
in every known robust mean estimation algorithm.
Definition~\ref{def:stability} requires that after restricting to a $(1-\eps)$-density subset $S'$,
the sample mean of $S'$ is within $\delta$ of \new{the mean of $X$, $\mu_X$,}
and the sample variance of $S'$ is $1\pm \delta^2/\epsilon$ in every direction.
\new{(We note that Definition~\ref{def:stability} is intended for distributions $X$ with covariance
$\Sigma_X = I$ or, more generally, $\Sigma_X \preceq I$. The case of arbitrary known or bounded covariance
can be reduced to this case via an appropriate linear transformation of the data.)}

The fact that the conditions of Definition~\ref{def:stability}
must hold {\em for every} large subset $S'$ of $S$ might make
it unclear if they can hold with high probability. However, one can show the following:

\begin{proposition} \label{prop:gaussian-good-set}
A set of i.i.d. samples from an identity covariance sub-gaussian distribution of size $\Omega(d/\eps^2)$ is
$(\eps, O(\eps\sqrt{\log(1/\eps)})$-stable with high probability.
\end{proposition}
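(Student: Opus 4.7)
The plan is to verify both stability conditions by reducing them, for each fixed direction $v$, to a statement about extreme values of $v \cdot x$, and then handling the supremum over $v$ by a net argument. First, I would assume without loss of generality that $\mu_X = 0$. The key structural observation is that, for a fixed unit vector $v$ and for each of the two conditions in Definition~\ref{def:stability}, the worst subset $S' \subseteq S$ of size $(1-\eps)|S|$ is obtained by removing the $\eps|S|$ points whose values of $v \cdot x$ (for condition~1) or $(v \cdot x)^2$ (for condition~2) are most extreme. Hence rather than union-bounding over the $\binom{|S|}{\eps |S|}$ subsets, it suffices to bound, for each direction $v$, the empirical mean of the $\eps |S|$ largest (or smallest) projections.

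Second, I would use a truncation argument. Set $T = C\sqrt{\log(1/\eps)}$ for a sufficiently large absolute constant $C$. For $v \cdot X$ sub-gaussian with mean $0$ and unit variance, elementary tail integrals give
\[
\E\bigl[|v\cdot X| \cdot \Ind[|v\cdot X|>T]\bigr] = O(\eps \sqrt{\log(1/\eps)}), \qquad
\E\bigl[(v\cdot X)^2 \cdot \Ind[|v\cdot X|>T]\bigr] = O(\eps \log(1/\eps)),
\]
and also $\pr[|v\cdot X|>T] \leq \eps$. Condition on the high-probability event that at most $2\eps|S|$ samples have $|v\cdot x_i|>T$. The contribution of any $\eps |S|$ points one removes is then split into an untruncated piece (bounded crudely by the two displayed expectations, up to concentration of the corresponding empirical averages) and a truncated piece bounded by $T$ times the removed fraction, which contributes $O(\eps \sqrt{\log(1/\eps)})$ to condition~1 and $O(\eps \log(1/\eps))$ to condition~2. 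Finally, the full empirical mean and empirical second moment of the truncated variables concentrate around their population counterparts at rate $O(\sqrt{1/n})$ by Bernstein/Hoeffding (using boundedness by $T$ and $T^2$ respectively), giving fluctuations of order $\eps$ when $|S| = \Omega(d/\eps^2)$.

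Third, I would promote this per-direction statement to a uniform one via a standard covering argument. Fix a $(1/2)$-net $\mathcal{N}$ of the unit sphere of size $5^d$. The Bernstein bound for each $v \in \mathcal{N}$ succeeds with probability $1 - \exp(-\Omega(d))$, so the union bound is harmless. To transfer from $\mathcal{N}$ to all unit vectors, I would invoke the classical inequality that, for any symmetric matrix $A$, $\|A\|_2 \leq 2 \sup_{v \in \mathcal{N}} |v^\top A v|$: applying this to the second-moment matrix of the set of truncated samples (restricted to any allowable subset) extends condition~2, while condition~1 extends by the analogous inequality for linear functionals of the sample-mean vector.

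The main obstacle is exactly the entanglement between the choice of outlier subset $S'$ and the direction $v$: naively there are doubly-exponentially many $(v,S')$ pairs to control. The resolution sketched above is that, once $v$ is fixed, the worst $S'$ is determined by thresholding $v\cdot x$, so the quantifier over $S'$ collapses to a one-parameter family governed by the truncation level $T$; the remaining supremum over $v$ is then handled by the $O(1)^d$-sized net, which is affordable because each individual concentration bound has failure probability $\exp(-\Omega(d))$ when $n = \Omega(d/\eps^2)$.
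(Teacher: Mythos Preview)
Your proposal is correct and follows essentially the same approach as the paper's sketch: both reduce the quantifier over subsets $S'$ to a per-direction extremal-removal statement, use sub-gaussian tail bounds to show that the $\eps$-tail in any fixed direction contributes $O(\eps\sqrt{\log(1/\eps)})$ to the mean and $O(\eps\log(1/\eps))$ to the variance, and then extend to all directions via a covering argument. You supply more concrete details (the explicit truncation level $T$, Bernstein/Hoeffding for the bounded truncated pieces, and the $5^d$ net) than the paper's sketch, but the skeleton is identical.
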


We sketch a proof of Proposition~\ref{prop:gaussian-good-set}.
The only property required for the proof is that the distribution of the uncorrupted data
has identity covariance and sub-gaussian tails in each direction, i.e., the tail probability of each
univariate projection is bounded from above by the Gaussian tail.

Fix a direction $v$.
To show that the first condition holds for this $v$,
we note that we can maximize $\frac{1}{|S'|}\sum_{x\in S'} v\cdot( x-\mu_X)$
by removing from $S$ the $\eps$-fraction of points $x$ for which $v\cdot x$ is smallest.
Since the empirical mean of $S$ is close to $\mu_X$ with high probability,
we need to understand how much this quantity is altered by removing the $\eps$-tail in the $v$-direction.
{Given our assumptions on the distribution of the uncorrupted data},
removing the $\eps$-tail only changes the mean by $O(\eps\sqrt{\log(1/\eps)})$.
Therefore, if the empirical distribution of $v\cdot x$, $x \in S$, behaves like a spherical Gaussian
in this way, the first condition is satisfied.

The second condition follows via  a similar analysis. We can minimize the relevant
quantity by removing the $\eps$-fraction of points $x \in S$ with $|v\cdot( x-\mu_X)|$ as large as possible.
If $v\cdot x$ {is} distributed like a {unit-variance} sub-gaussian, the total mass of its square over the $\eps$-tails is
$O(\eps\log(1/\eps))$. We have thus established that both conditions hold with high probability for any fixed direction.
Showing that the conditions hold with high probability for all directions simultaneously
can be shown by an appropriate covering argument.

More generally, one can show quantitatively different stability conditions under various distributional assumptions.
In particular, we state the following proposition without proof.
(The interested reader is referred to~\cite{DKK+17} for a proof.)

\begin{proposition} \label{prop:bounded-cov-good-set}
A set of i.i.d. samples from a distribution with covariance $\Sigma \preceq I$ of size $\tilde{\Omega}(d/\eps)$ is
$(\eps, O(\sqrt{\eps}))$-stable with high probability.
\end{proposition}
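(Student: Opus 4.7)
Take $\mu_X=0$ WLOG and set $\delta = C\sqrt{\eps}$ for a sufficiently large absolute constant $C$. The plan is to reduce both stability conditions to one spectral claim --- that $\|M_S\|_2 = O(1)$ with high probability, where $M_S = \frac{1}{n}\sum_{x\in S} xx^T$ --- and then verify this claim by truncating heavy tails and applying matrix concentration. For the reduction, write any qualifying subset as $S'=S\setminus A$ with $|A|\leq \eps n$. Cauchy--Schwarz applied to $A$ gives $\bigl|\tfrac{|A|}{n}\,v\cdot \mu_A\bigr| \leq \sqrt{(|A|/n)\cdot v^T M_S v}\leq \sqrt{\eps\, v^T M_S v}$, and hence $|v\cdot \mu_{S'} - v\cdot \mu_S|=O(\sqrt{\eps\, v^T M_S v})$; while $v^T M_{S'} v \leq \tfrac{1}{1-\eps}v^T M_S v$, since dropping points from a sum of PSD matrices only shrinks it. Combined with the standard concentration $\|\mu_S\|^2 \lesssim \tr(\Sigma)/n = O(\eps)$, both stability conditions reduce to $v^T M_S v = O(1)$, uniformly in unit $v$. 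At the population level this is already true: for $Y=v\cdot X$ with $\sigma_v^2 = v^T\Sigma v\leq 1$, Cauchy--Schwarz gives $|\E[Y\mathbf{1}_E]| \leq \sigma_v\sqrt{\Pr[E]}\leq \sqrt{\eps}$ and $\E[Y^2\mathbf{1}_E]\leq 1$ for any event of probability at most $\eps$, matching $\delta = O(\sqrt{\eps})$ and $\delta^2/\eps = O(1)$.

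For the empirical spectral bound, bounded covariance alone does not guarantee concentration of $\|M_S\|$ under heavy tails, so I would truncate. By Markov $\Pr[\|X\|^2 > d/\eps]\leq \eps$, and by Chernoff the set $B\subseteq S$ of samples with $\|x\|^2 > d/\eps$ has $|B|\leq O(\eps)n$ w.h.p. For the truncated samples $S\setminus B$, each $xx^T$ has spectral norm $\leq R^2 = d/\eps$ and mean of spectral norm $\leq\|\Sigma\|\leq 1$, so matrix Bernstein yields $\|M_{S\setminus B}\| = O(1)$ w.h.p. once $n=\tilde\Omega(d/\eps)$ (the Bernstein terms $R\sqrt{\log d/n}$ and $R^2\log d/n$ both equal $O(1)$ at this sample size). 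The outliers in $B$ are absorbed by inflating the adversary's budget by a constant factor: any $(1-\eps)$-subset of $S$ contains a $(1-O(\eps))$-subset of $S\setminus B$, so $(\eps,O(\sqrt{\eps}))$-stability of $S\setminus B$ transfers to that of $S$ with a larger absolute constant.

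The fixed-$v$ bound extends uniformly over $S^{d-1}$ by a $\tfrac12$-net argument of cardinality $2^{O(d)}$ combined with Lipschitz extension; the per-direction failure probability $e^{-\Omega(\eps n)}$ from matrix Bernstein and Chernoff is balanced against the net size precisely when $n=\tilde\Omega(d/\eps)$. The main technical obstacle is the spectral control of $M_S$ under only a bounded-covariance hypothesis --- without the truncation-and-absorption trick, heavy tails can make $\|M_S\|$ unbounded w.h.p. --- and this is also why Proposition~\ref{prop:bounded-cov-good-set} incurs an $O(\sqrt{\eps})$ error where the sub-Gaussian Proposition~\ref{prop:gaussian-good-set} loses only $O(\eps\sqrt{\log(1/\eps)})$: in the sub-Gaussian case the pointwise sample norms are already well-controlled, so the separate truncation layer is unnecessary.
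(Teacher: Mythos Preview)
The paper does not prove this proposition---it is stated without proof with a pointer to~\cite{DKK+17}---so there is no in-paper argument to compare against. Your overall plan (reduce both conditions to $\|M_S\|_2=O(1)$, then truncate heavy points and apply matrix Bernstein) is sound and is essentially how the reference proceeds. But the final ``absorption'' step is wrong as written: you claim that stability of $S\setminus B$ transfers to stability of $S$ because every large subset of $S$ contains a large subset of $S\setminus B$. That containment is true but irrelevant---stability of $S$ demands that the two moment conditions hold for $S'$ itself, not for $S'\setminus B$, and $S'$ may contain points of $B$ whose norms are arbitrarily large (your truncation only \emph{lower}-bounds their norm). In fact the proposition, read literally as stability of the raw i.i.d.\ set, cannot hold with probability $1-o(1)$ under only a covariance bound: take $X=te_1$ with probability $1/t^2$ and $0$ otherwise, $t^2=Kn$; then $\Sigma\preceq I$, yet with probability $\approx 1/K$ some sample equals $te_1$ and $e_1^TM_Se_1\geq K$ for $S'=S$, violating condition~2 for any fixed constant in the $O(\sqrt{\eps})$. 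The correct conclusion of your argument is that the \emph{pruned} set $S\setminus B$ is $(O(\eps),O(\sqrt{\eps}))$-stable; this is what the downstream algorithms actually need, since an $\eps$-corruption $T$ of $S$ is an $O(\eps)$-corruption of $S\setminus B$.

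A minor point: the covering argument in your last paragraph is both unnecessary and quantitatively off. Matrix Bernstein already controls $\|M_{S\setminus B}\|_2=\sup_v v^TM_{S\setminus B}v$ directly, so no net is needed. Moreover, after truncation at $R^2=d/\eps$ the per-direction failure from scalar Bernstein is $e^{-\Omega(n\eps/d)}$, not $e^{-\Omega(n\eps)}$; union-bounding over a net of size $2^{O(d)}$ would then force $n=\Omega(d^2/\eps)$ rather than $\tilde\Omega(d/\eps)$. Drop the net and rely on the matrix inequality.
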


We note that analogous bounds can be easily shown for {\em identity covariance} distributions with bounded higher central moments.
For example, if our distribution has identity covariance and its $k$-th central moment is bounded from above by a constant,
one can show that a set of $\Omega(d/\eps^{2-2/k})$ samples is $(\eps, O(\eps^{1-1/k}))$-stable with high probability.


The aforementioned notion of stability is powerful and suffices for robust mean estimation.
Some of the algorithms that will be presented in this section only work
under the stability condition, while others require additional conditions beyond stability.

The main reason why stability suffices is quantified in the following lemma:
\begin{lemma}[Certificate for Empirical Mean]\label{lem:stability}
Let $S$ be an $(\eps,\delta)$-stable set with respect to a distribution $X$, for some $\delta \geq \epsilon>0$.
Let $T$ be an {$\eps$-corrupted version of $S$.}
Let $\mu_T$ and $\Sigma_T$ be the empirical mean and covariance of $T$.
If the largest eigenvalue of $\Sigma_T$ is at most $1+\lambda$, for some $\lambda \geq 0$, then
$\|\mu_T-\mu_X\|_2 \leq O(\delta +\sqrt{\eps \lambda}).$
\end{lemma}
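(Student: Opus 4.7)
The plan is to reduce the problem to a one-dimensional estimate by projecting onto the unit vector $v := (\mu_T - \mu_X)/\|\mu_T - \mu_X\|_2$, so that $M := \|\mu_T - \mu_X\|_2 = v \cdot (\mu_T - \mu_X)$. Writing $T = S' \cup E$ with $S' := S \cap T$ the set of retained good points and $E := T \setminus S$ the injected outliers, we have $|S'| \ge (1-\eps)n$ and $|E| \le \eps n$, where $n = |S| = |T|$. This yields the decomposition
\[
n M \;=\; \sum_{x \in S'} v \cdot (x - \mu_X) \;+\; \sum_{x \in E} v \cdot (x - \mu_X).
\]
The first sum is controlled directly by part~1 of Definition~\ref{def:stability} applied to the subset $S' \subseteq S$: its absolute value is at most $\delta |S'| \le \delta n$.

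For the second sum, I apply Cauchy--Schwarz, so the task reduces to upper bounding $\sum_{x \in E}(v\cdot(x-\mu_X))^2$. I sandwich this quantity using both pieces of available information. On the one hand, expanding the variance of $T$ around $\mu_T$ and invoking $v^\top \Sigma_T v \le 1+\lambda$ gives
\[
\sum_{x \in T}(v \cdot (x - \mu_X))^2 \;=\; n\, v^\top \Sigma_T v \;+\; n (v\cdot(\mu_T-\mu_X))^2 \;\le\; n(1+\lambda + M^2).
\]
On the other hand, the \emph{lower} bound in part~2 of Definition~\ref{def:stability} applied to $S'$ yields $\sum_{x \in S'}(v\cdot(x-\mu_X))^2 \ge |S'|(1 - \delta^2/\eps) \ge n(1-\eps)(1-\delta^2/\eps)$. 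Subtracting, $\sum_{x \in E}(v\cdot(x-\mu_X))^2 \le n\bigl(\lambda + M^2 + O(\eps + \delta^2/\eps)\bigr)$, and plugging into Cauchy--Schwarz with $|E| \le \eps n$ and splitting $\sqrt{a+b+c+d}$ termwise produces
\[
M \;\le\; \delta \;+\; \sqrt{\eps\,\lambda} \;+\; \sqrt{\eps}\,M \;+\; O(\eps + \delta).
\]
For $\eps$ bounded away from $1$, the $\sqrt{\eps}\,M$ term can be absorbed on the left, and the stray $\eps$ is swallowed into $\delta$ using the hypothesis $\delta \ge \eps$ built into the stability definition; this gives the claimed $M = O(\delta + \sqrt{\eps\lambda})$.

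The main subtlety, and the reason the proof is not completely routine, is getting the \emph{tight} dependence $\sqrt{\eps\lambda}$ rather than the weaker $\sqrt{\eps(1+\lambda)}$ produced by the naive bound $\sum_{x\in E}(\cdot)^2 \le \sum_{x\in T}(\cdot)^2 \le n(1+\lambda+M^2)$, which leaks a spurious $\sqrt{\eps}$. Avoiding this leak requires the \emph{lower} half of stability condition~2: the retained good points already carry essentially $n$ units of variance in the direction $v$, so only the \emph{excess} $n\lambda$ (plus negligible error) can be attributed to the outliers. In the regime $\delta^2/\eps \ge 1$ this lower bound degenerates to a vacuous statement, but there $\delta \ge \sqrt{\eps}$, so $\delta$ already dominates the stray $\sqrt{\eps}$ term and the final conclusion still holds.
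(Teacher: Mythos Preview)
Your proof is correct and uses the same three ingredients as the paper---stability condition~1 for $S'$, the \emph{lower} half of stability condition~2 for $S'$, and the eigenvalue bound on $\Sigma_T$---but organizes them differently. The paper projects onto the unit vector in the direction $\mu_{S'}-\mu_{T'}$ (where $T'=T\setminus S$) and uses the covariance decomposition
\[
\Sigma_T=(1-\eps)\Sigma_{S'}+\eps\Sigma_{T'}+\eps(1-\eps)(\mu_{S'}-\mu_{T'})(\mu_{S'}-\mu_{T'})^T
\]
together with $\Sigma_{T'}\succeq 0$ to obtain $\|\mu_{S'}-\mu_{T'}\|_2=O(\delta/\eps+\sqrt{\lambda/\eps})$, and then concludes via $\mu_T=(1-\eps)\mu_{S'}+\eps\mu_{T'}$ and the triangle inequality. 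Your route---projecting directly onto $(\mu_T-\mu_X)/\|\mu_T-\mu_X\|_2$ and using Cauchy--Schwarz on the outlier sum---is arguably more natural in its choice of direction, but pays a small price: you pick up a $\sqrt{\eps}\,M$ term that must be reabsorbed (requiring $\eps$ bounded away from $1$, which is fine since $\eps<1/2$), and you need the side remark about the degenerate regime $\delta^2/\eps\ge 1$. The paper's decomposition avoids both of these because the positivity of $\Sigma_{T'}$ plays the role of Cauchy--Schwarz without introducing the self-referential term. Either way, your observation that the \emph{lower} bound in condition~2 is what buys the tight $\sqrt{\eps\lambda}$ (rather than $\sqrt{\eps(1+\lambda)}$) is exactly the point, and the paper remarks on this as well.
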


\begin{proof}[Proof of Lemma~\ref{lem:stability}.]
Let $S' = S \cap T$ and 
$T' = T \setminus S'$. By replacing $S'$ with a subset if necessary, we may assume that $|S'|= (1-\eps) |S|$ and
$|T'|=\eps |S|$. Let $\mu_{S'},\mu_{T'},\Sigma_{S'},\Sigma_{T'}$ represent the empirical
means and covariance matrices of $S'$ and $T'$. A simple calculation gives that
$$
\Sigma_T = (1-\eps) \Sigma_{S'} + \eps \Sigma_{T'} + \eps(1-\eps)(\mu_{S'}-\mu_{T'})(\mu_{S'}-\mu_{T'})^T \;.
$$
Let $v$ be the unit vector in the direction of $\mu_{S'}-\mu_{T'}$. We have that
\begin{align*}
1+\lambda \geq v^T \Sigma_T v & = (1-\eps) v^T\Sigma_{S'}v + \eps v^T\Sigma_{T'}v
+ \eps(1-\eps)v^T(\mu_{S'}-\mu_{T'})(\mu_{S'}-\mu_{T'})^Tv \\
& \geq (1-\eps) (1-\delta^2/\eps) + \eps(1-\eps) \|\mu_{S'}-\mu_{T'}\|_2^2\\
& \geq 1 - O(\delta^2/\eps) + (\eps/2) \|\mu_{S'}-\mu_{T'}\|_2^2 \;,
\end{align*}
where we used {the variational characterization of eigenvalues, the fact that $\Sigma_{T'}$ is positive
semidefinite, and} the second stability condition for $S'$.
By rearranging, we obtain that $\|\mu_{S'}-\mu_{T'}\|_2 = O(\delta/\eps+\sqrt{\lambda/\eps}).$
Therefore, we can write
\begin{align*}
\|\mu_T-\mu_X\|_2 & = \|(1-\eps)\mu_{S'}+\eps \mu_{T'} - \mu_X\|_2
= \|\mu_{S'}-\mu_X + \epsilon(\mu_{T'}-\mu_{S'})\|_2\\
& \leq \|\mu_{S'}-\mu_X\|_2 + \eps \|\mu_{S'}-\mu_{T'}\|_2
= O(\delta) + \eps \cdot O(\delta/\eps + \sqrt{\lambda/\eps})\\
& = O(\delta + \sqrt{\lambda \eps})\;,
\end{align*}
where we used the first stability condition for $S'$ and our bound on $\|\mu_{S'}-\mu_{T'}\|_2$.
\end{proof}


We note that the proof of Lemma~\ref{lem:stability} only used the lower bound part in the second
condition of Definition~\ref{def:stability}, i.e., that the sample variance of $S'$ in each direction
is at least $1-\delta^2/\eps$. The upper bound part will be crucially used in the design and analysis of
our robust mean estimation algorithms in the following sections.

Lemma~\ref{lem:stability} says that if our input set of points $T$ is {an $\eps$-corrupted version of
any stable set $S$ and has bounded sample covariance, the sample mean
of $T$ closely approximates the true mean. This lemma, or a variant thereof,
is a key result in all known robust mean estimation algorithms.

Unfortunately, we are not always guaranteed that the set $T$ we are given has this property.
{In order to deal with this, we will want to find a subset of $T$ with bounded covariance and large intersection with $S$.
However, for some of the algorithms presented, it will be convenient
to find a probability distribution over $T$ rather than a subset.
For this, we will need a slight generalization of Lemma~\ref{lem:stability}.}
\begin{lemma}\label{lem:stability-full}
Let $S$ be an $(\eps,\delta)$-stable set with respect to a distribution $X$,
for some $\delta \geq \eps>0$ {with $|S|>1/\eps$}.
Let $W$ be a probability distribution that differs from $U_S$, the uniform distribution over $S$,
by at most $\eps$ in total variation distance. Let $\mu_W$ and $\Sigma_W$ be the mean and covariance of $W$.
If the largest eigenvalue of $\Sigma_W$ is at most $1+\lambda$, for some $\lambda \geq 0$, then
$\|\mu_W-\mu_X\|_2 \leq O(\delta +\sqrt{\eps\lambda}).$
\end{lemma}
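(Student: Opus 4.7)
The plan is to reduce Lemma~\ref{lem:stability-full} to (essentially) the proof of Lemma~\ref{lem:stability} by replacing the set-based decomposition $T = S' \sqcup T'$ with a distributional decomposition of $W$. Let $\eps' = d_{TV}(W, U_S) \le \eps$ and write, in the standard coupling/minimum form,
$W = (1-\eps') P + \eps' W_1$ and $U_S = (1-\eps') P + \eps' W_2$,
where $(1-\eps')P(x) = \min(W(x), U_S(x))$. Then $P$ is supported on $S$ with $P(x) \le 1/((1-\eps')|S|)$, and $W_1, W_2$ are probability distributions. This plays the role of the clean sub-sample / corruption split used for $T$ in Lemma~\ref{lem:stability}.

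The main technical step is to transfer the stability bounds from $S$ to $P$. The key observation is that $P$ lies in the polytope of nonnegative distributions on $S$ with density at most $1/((1-\eps')|S|)$, whose extreme points are uniform distributions on subsets $S' \subseteq S$ of size $(1-\eps')|S| \ge (1-\eps)|S|$ (modulo rounding $(1-\eps')|S|$ to an integer, which introduces only a $(1+o(1))$ multiplicative error under the hypothesis $|S| > 1/\eps$). Taking convex combinations of the stability bounds applied to each such $S'$, I would deduce
\[ |v \cdot (\mu_P - \mu_X)| \le \delta \quad\text{and}\quad \left| \E_P[(v\cdot(x-\mu_X))^2] - 1 \right| \le \delta^2/\eps \]
for every unit vector $v$. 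In particular, $v^T \Sigma_P v \ge 1 - \delta^2/\eps - \delta^2 = 1 - O(\delta^2/\eps)$.

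Next, I would invoke the mixture covariance identity
\[ \Sigma_W = (1-\eps')\Sigma_P + \eps' \Sigma_{W_1} + \eps'(1-\eps')(\mu_P - \mu_{W_1})(\mu_P - \mu_{W_1})^T, \]
exactly mirroring the computation in the proof of Lemma~\ref{lem:stability}. Testing along $v = (\mu_P - \mu_{W_1})/\|\mu_P - \mu_{W_1}\|_2$ and using $\Sigma_W \preceq (1+\lambda)I$ together with $\Sigma_{W_1} \succeq 0$ and the lower bound $v^T\Sigma_P v \ge 1 - O(\delta^2/\eps)$ gives
\[ \eps'(1-\eps')\|\mu_P - \mu_{W_1}\|_2^2 \le \lambda + O(\delta^2/\eps), \]
so $\|\mu_P - \mu_{W_1}\|_2 = O\!\left(\sqrt{\lambda/\eps'} + \delta/\sqrt{\eps\eps'}\right)$. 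Finally, from $\mu_W - \mu_X = (\mu_P - \mu_X) + \eps'(\mu_{W_1} - \mu_P)$ and the triangle inequality,
\[ \|\mu_W - \mu_X\|_2 \le \|\mu_P - \mu_X\|_2 + \eps'\|\mu_P - \mu_{W_1}\|_2 \le O(\delta) + O\!\left(\sqrt{\eps'\lambda} + \delta\sqrt{\eps'/\eps}\right) = O(\delta + \sqrt{\eps\lambda}), \]
using $\eps' \le \eps$.

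The main obstacle is Step 2: lifting the combinatorial ``for every $S' \subseteq S$ with $|S'| \ge (1-\eps)|S|$'' stability condition to a statement about an arbitrary density-bounded distribution $P$ on $S$. The extreme-point argument handles this when $(1-\eps')|S|$ is an integer; in general one rounds to subsets of size $\lceil (1-\eps')|S| \rceil$ and the resulting $O(1/|S|)$ slack is absorbed by the hypothesis $|S| > 1/\eps$, so the stability constants survive in $O(\cdot)$ form. Every other step is a direct translation of the proof of Lemma~\ref{lem:stability} from the set-based to the distributional setting.
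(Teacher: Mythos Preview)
Your proposal is correct and follows essentially the same route as the paper's (sketched) proof: the paper says the argument is ``essentially identical'' to that of Lemma~\ref{lem:stability}, with the only new ingredient being that one must verify the mean and variance of the restriction of $W$ to $S$ are approximately correct, and this is precisely what your extreme-point argument on the density-bounded polytope accomplishes. Your TV-overlap decomposition $W=(1-\eps')P+\eps'W_1$ is the natural distributional analogue of the $T=S'\sqcup T'$ split, and the remainder of your computation mirrors the paper's proof of Lemma~\ref{lem:stability} line by line.
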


{Note that this subsumes Lemma \ref{lem:stability} by letting $W$ be the uniform distribution over $T$.
The proof is essentially identical to that of Lemma \ref{lem:stability}, except that we need to
show that the mean and variance of the conditional distribution $W | S$ are approximately correct,
whereas in Lemma~\ref{lem:stability} the bounds on the mean and variance of $S\cap T$
followed directly from stability.}

Lemma~\ref{lem:stability-full} clarifies the goal of our outlier removal procedure.
In particular, given our initial $\eps$-corrupted set $T$, we will attempt to find a distribution
$W$ supported on $T$ so that $\Sigma_W$ has no large eigenvalues. The weight
$W(x)$, $x \in T$, quantifies our belief whether point $x$ is an inlier or an outlier.
We will also need to ensure that any such $W$ we choose is close to the uniform distribution over $S$.

More concretely, we now describe a framework that captures our robust mean estimation algorithms.
We start with the following definition: 
\begin{definition} \label{def:c}
Let $S$ be a $(3\eps,\delta)$-stable set with respect to $X$ and let $T$ be
{an $\eps$-corrupted version of $S$.} Let $\mathcal{C}$ be the set of
all probability distributions $W$ supported on $T$, where $W(x) \leq \frac{1}{|T|(1-\eps)}$, for all $x \in T$.
\end{definition}
We note that \emph{any} distribution in $\mathcal{C}$ differs from $U_S$, the uniform distribution on $S$, by at most $3\eps$.
Indeed, for $\eps \leq 1/3$, we have that:
\begin{align*}
\dtv(U_S,W) & = \sum_{x\in T} \max\{ W(x)-U_S(x),0 \} \\
&= \sum_{x\in S\cap T} \max\{ W(x)-1/|T|, 0\} + \sum_{x\in T\setminus S} W(x)\\
& \leq \sum_{x\in S\cap T} \frac{\eps}{|T|(1-\eps)} + \sum_{x\in T \setminus S}\frac{1}{|T|(1-\eps)}\\
&\leq |T|\left(\frac{\eps}{|T|(1-\eps)}\right) + \eps |T| \left( \frac{1}{|T|(1-\eps)}\right)\\
& = \frac{2\eps}{1-\eps} \leq 3\eps \;.
\end{align*}
Therefore, if we find $W\in \mathcal{C}$ with $\Sigma_W$ having no large eigenvalues,
Lemma \ref{lem:stability-full} implies that
$\mu_W$ {is} a good approximation to $\mu_X$.
Fortunately, we know that such a $W$ exists! In particular, if we take $W$ to be $W^{\ast}$,
the uniform distribution over {$S\cap T$}, 
the largest eigenvalue is at most $1+\delta^2/\epsilon$, and thus we achieve $\ell_2$-error $O(\delta)$.

At this point, we have an {\em inefficient} algorithm for approximating $\mu_X$:
Find \emph{any} $W \in \mathcal{C}$ with bounded covariance.
The remaining question is how we can efficiently find one.
There are two basic algorithmic techniques to achieve this,
that we present in the subsequent subsections.

The first algorithmic technique we will describe is based on convex programming.
{We will call this {\em the unknown convex programming method}.}
Note that $\mathcal{C}$ is a convex set and that finding a point in $\mathcal{C}$
that has bounded covariance is \emph{almost} a convex program.
It is not quite a convex program, because the variance of $v\cdot W$, for fixed $v$,
is not a convex function of $W$. However, one can show that
given a $W$ with variance  in some direction significantly larger than $1+\delta^2/\eps$,
we can efficiently construct a hyperplane separating $W$ from $W^{\ast}$
{(recall that $W^\ast$ is the uniform distribution over $S\cap T$)}
(Section~\ref{ssec:convex-program}). This method has the advantage of naturally working
under only the stability assumption. On the other hand, as it relies on the ellipsoid algorithm,
it is quite slow (although polynomial time).

Our second technique, which we will call {\em filtering}, is an iterative
outlier removal method that is typically faster, as it relies primarily on spectral techniques.
The main idea of the method is the following: If $\Sigma_W$ does not have large eigenvalues,
then the empirical mean is close to the true mean. Otherwise,
there is some unit vector $v$ such that $\Var[v\cdot W]$ is substantially larger than it should be.
This can only be the case if $W$ assigns substantial mass to elements of $T \setminus S$
that have values of $v\cdot x$ very far from the true mean {of $v\cdot \mu$}. This observation allows us to perform
some kind of outlier removal, in particular by removing (or down-weighting) the points $x$ that have
$v\cdot x$ inappropriately large. An important conceptual property is that one cannot afford to remove
only outliers, but it is possible to ensure that more outliers are removed than inliers.
Given a $W$ where $\Sigma_W$ has a large eigenvalue, one filtering step gives
a new distribution $W' \in \mathcal{C}$ that is closer to $W^{\ast}$ than $W$ was.
Repeating the process eventually gives a $W$ with no large eigenvalues.
The filtering method and its variations are discussed in Section \ref{ssec:filter}.

\subsection{The Unknown Convex Programming Method}\label{ssec:convex-program}

By Lemma \ref{lem:stability-full}, it suffices to find a distribution $W\in \mathcal{C}$ with $\Sigma_W$ having no large eigenvalues.
We note that this condition \emph{almost} defines a convex program.
This is because $\mathcal{C}$ is a convex set of probability distributions and
the bounded covariance condition says that $\Var[v\cdot W] \leq 1+\lambda$ for all unit vectors $v$.
Unfortunately, the variance $\Var[v\cdot W] = \E[|v\cdot( W-\mu_W)|^2]$ is not quite linear in $W$.
(If we instead had $\E[|v\cdot( W-\nu)|^2]$, where $\nu$ is some fixed vector, this would be linear in $W$.)
However, we will show that a unit vector $v$ for which $\Var[v\cdot W]$ is too large,
can be used to obtain a separation oracle, i.e., a linear function $L$ for which $L(W)>L(W^\ast)$.

Suppose that we identify a unit vector $v$ such that $\Var[v\cdot W] = 1+\lambda$, where $\lambda > c(\delta^2/\epsilon)$
for a sufficiently large universal constant $c>0$. Applying Lemma \ref{lem:stability-full} to the one-dimensional projection
$v\cdot W$, gives $|v\cdot (\mu_W - \mu_X)| \leq O(\delta+\sqrt{\epsilon \lambda}) = O(\sqrt{\epsilon\lambda}).$

{Let $L(Y):=\E_X[|v\cdot(Y-\mu_W)|^2]$ and note that $L$ is a linear function of the probability distribution $Y$}
with $L(W)=1+\lambda$. We can write
\begin{align*}
L(W^{\ast}) & = \E_{W^{\ast}}[|v\cdot(W^{\ast}-\mu_W)|^2]
 = \Var[v\cdot W^{\ast}] + |v\cdot(\mu_W - \mu_{W^{\ast}})|^2 \\
& \leq 1+\delta^2/\eps + 2|v\cdot(\mu_W - \mu_X)|^2 + 2|v\cdot(\mu_{W^{\ast}} - \mu_X)|^2\\
& \leq 1+O(\delta^2/\eps + \eps\lambda) < 1+\lambda = L(W) \;.
\end{align*}
In summary, we have an explicit convex set $\mathcal{C}$ of probability distributions from
which we want to find one with eigenvalues bounded by $1+O(\delta^2/\eps)$.
Given any $W\in \mathcal{C}$ which does not satisfy this condition, we can produce a linear function
$L$ that separates $W$ from $W^{\ast}$.
Using the ellipsoid algorithm,
we obtain the following general theorem:

\begin{theorem} \label{thm:general-mean}
Let $S$ be a $(3\eps,\delta)$-stable set with respect to a distribution $X$
and let $T$ be {an $\eps$-corrupted version of $S$.}
There exists a polynomial time algorithm which given $T$ returns
$\widehat{\mu}$ such that $\|\widehat{\mu} - \mu_X\|_2 = O(\delta)$.
\end{theorem}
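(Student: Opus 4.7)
The plan is to cast the problem as a convex feasibility problem over the set $\mathcal{C}$ of Definition~\ref{def:c} and solve it by the ellipsoid method, instantiated with the separation oracle sketched in the discussion preceding the theorem.

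First, I would formalize the feasibility problem: find $W \in \mathcal{C}$ such that the top eigenvalue of $\Sigma_W$ is at most $1 + c \delta^2/\eps$, for a sufficiently large absolute constant $c$. The set $\mathcal{C}$ is an explicit polytope in $\R^{|T|}$ cut out by non-negativity, normalization, and the per-point bound $W(x) \leq 1/((1-\eps)|T|)$, so membership is trivial to check in polynomial time. A priori this program might look empty, but the distribution $W^{\ast}$ uniform on $S \cap T$ is always feasible: each of its point masses equals $1/|S\cap T| \leq 1/((1-\eps)|T|)$, so $W^{\ast} \in \mathcal{C}$, and the second stability condition of Definition~\ref{def:stability} applied to the $(1-\eps)$-dense subset $S\cap T \subseteq S$ gives $\Var_{W^{\ast}}[v \cdot x] \leq 1 + \delta^2/\eps$ in every unit direction $v$.

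Second, I would implement the separation oracle exactly as in Section~\ref{ssec:convex-program}. Given a candidate $W$, first verify $W \in \mathcal{C}$; if not, a violated linear constraint is a separating hyperplane. Otherwise compute $\mu_W$ and the top eigenvector $v$ of $\Sigma_W$ with eigenvalue $1 + \lambda$. If $\lambda \leq c\delta^2/\eps$, output $W$ as feasible. Otherwise consider the functional $L(Y) = \E_{Y}[(v \cdot (Y - \mu_W))^2]$, which is linear in the distribution $Y$ because $v$ and $\mu_W$ are now fixed vectors. The calculation already carried out in the excerpt shows $L(W^{\ast}) \leq 1 + O(\delta^2/\eps + \eps\lambda) < 1 + \lambda = L(W)$, so the halfspace $\{Y : L(Y) \leq L(W) - \eta\}$, for a small polynomial slack $\eta$, contains $W^{\ast}$ but not $W$ and serves as the cut.

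Third, I run the ellipsoid algorithm starting from a bounding box for $\mathcal{C}$; it terminates in polynomially many iterations, each needing one eigenvector computation on a $d \times d$ matrix. It must return a feasible $W \in \mathcal{C}$, since $W^{\ast}$ is feasible. Because $W$ differs from $U_S$ by at most $3\eps$ in total variation (as verified after Definition~\ref{def:c}) and $S$ is $(3\eps,\delta)$-stable, Lemma~\ref{lem:stability-full} yields $\|\mu_W - \mu_X\|_2 = O(\delta + \sqrt{\eps \cdot c\delta^2/\eps}) = O(\delta)$, so outputting $\widehat{\mu} = \mu_W$ finishes the proof. The main obstacle is the usual technicality of the ellipsoid method: one needs the feasible region to have full-dimensional interior and all quantities to be representable in polynomially many bits. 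This is handled by running the oracle with a slightly relaxed threshold $1 + 2c\delta^2/\eps$ (so $W^{\ast}$ is strictly interior), together with standard rounding of the separating hyperplane to rational coefficients of polynomial bit-length.
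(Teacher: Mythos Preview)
Your proposal is correct and follows essentially the same route as the paper: the paper also reduces the problem to convex feasibility over $\mathcal{C}$, uses the linear functional $L(Y)=\E_Y[(v\cdot(Y-\mu_W))^2]$ built from the top eigenvector of $\Sigma_W$ as the separation oracle (with $W^\ast$ as the known feasible point), runs the ellipsoid method, and then invokes Lemma~\ref{lem:stability-full} on the output. Your treatment of the ellipsoid-method technicalities (strict interior via a relaxed threshold, bit complexity) is more explicit than the paper's, which simply asserts that the ellipsoid algorithm applies.
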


\paragraph{Implications for Concrete Distribution Families.}

Combining Theorem~\ref{thm:general-mean} with corresponding stability bounds,
we obtain a number of concrete applications for various
distribution families of interest.
Using Proposition~\ref{prop:gaussian-good-set}, we obtain:

\begin{corollary}[Identity Covariance Sub-gaussian Distributions] \label{cor:mean-subgaussian-id}
Let $T$ be an $\eps$-corrupted set of samples of size $\Omega(d/\eps^2)$
from an identity covariance sub-gaussian distribution $X$ on $\R^d$.
There exists a polynomial time algorithm which given $T$ returns
$\widehat{\mu}$ such that with high probability $\|\widehat{\mu} - \mu_X\|_2 = O(\eps \sqrt{\log(1/\eps)})$.
\end{corollary}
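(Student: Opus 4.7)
The plan is to obtain Corollary~\ref{cor:mean-subgaussian-id} as a direct consequence of Theorem~\ref{thm:general-mean} combined with the stability bound of Proposition~\ref{prop:gaussian-good-set}. Since Theorem~\ref{thm:general-mean} takes as hypothesis that the hidden uncorrupted set $S$ is $(3\eps,\delta)$-stable with respect to $X$, the only nontrivial step is to certify that, with high probability over the draw of the $n$ i.i.d.\ samples from the sub-gaussian distribution $X$, the resulting set $S$ satisfies exactly this stability condition with a suitably small $\delta$.

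For that, I would invoke Proposition~\ref{prop:gaussian-good-set} with contamination parameter $3\eps$ in place of $\eps$. The proposition guarantees that a set of $\Omega(d/(3\eps)^2) = \Omega(d/\eps^2)$ i.i.d.\ samples from an identity-covariance sub-gaussian distribution is, with high probability, $(3\eps, \delta)$-stable with $\delta = O(3\eps\sqrt{\log(1/(3\eps))}) = O(\eps\sqrt{\log(1/\eps)})$. The replacement of $\eps$ by $3\eps$ only affects the hidden universal constants, so the sample-complexity bound $n = \Omega(d/\eps^2)$ stated in the corollary is sufficient. Condition on the (high-probability) event that $S$ is stable in this sense.

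On this event, the input $T$ is by assumption obtained from $S$ by modifying at most $\eps n$ points, so $T$ is precisely an $\eps$-corrupted version of the $(3\eps,\delta)$-stable set $S$. This is exactly the setting in which Theorem~\ref{thm:general-mean} applies: running the polynomial-time algorithm of that theorem on $T$ outputs $\widehat{\mu}$ with $\|\widehat{\mu}-\mu_X\|_2 = O(\delta) = O(\eps\sqrt{\log(1/\eps)})$, which is the claimed bound.

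The main obstacle here is purely conceptual rather than technical: all the substantive work has already been done upstream. Specifically, Proposition~\ref{prop:gaussian-good-set} encapsulates the concentration-plus-covering argument needed to show that sub-gaussian samples remain well-behaved after an arbitrary adversarial $\eps$-deletion, and Theorem~\ref{thm:general-mean} encapsulates the convex-programming/ellipsoid algorithm that, given access to a stable reference set, certifies that the empirical mean of some $W \in \mathcal{C}$ is within $O(\delta)$ of $\mu_X$ via Lemma~\ref{lem:stability-full}. The proof of the corollary reduces to the bookkeeping of aligning the stability parameters ($\eps$ versus $3\eps$) and transporting the error guarantee $O(\delta)$ into the sub-gaussian-specific rate $O(\eps\sqrt{\log(1/\eps)})$.
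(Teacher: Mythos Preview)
Your proposal is correct and matches the paper's approach exactly: the paper states the corollary immediately after Theorem~\ref{thm:general-mean} with the preface ``Using Proposition~\ref{prop:gaussian-good-set}, we obtain,'' and your argument---applying Proposition~\ref{prop:gaussian-good-set} at parameter $3\eps$ to certify $(3\eps,\delta)$-stability with $\delta = O(\eps\sqrt{\log(1/\eps)})$, then invoking Theorem~\ref{thm:general-mean}---is precisely the intended derivation.
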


We note that Corollary~\ref{cor:mean-subgaussian-id} can be immediately
adapted for identity covariance distributions satisfying weaker concentration assumptions.
For example, if $X$ satisfies sub-exponential concentration in each direction, we obtain an efficient robust
mean estimation algorithm with $\ell_2$-error of $O(\eps \log(1/\eps))$. If $X$ has identity covariance
and  bounded $k$-th central moments, $k \geq 2$, we obtain error $O(\eps^{1-1/k})$. These error bounds
are information-theoretically optimal, within constant factors.

For distributions with unknown bounded covariance, using Proposition~\ref{prop:bounded-cov-good-set}
we obtain:

\begin{corollary}[Unknown Bounded Covariance Distributions] \label{cor:mean-bounded-cov}
Let $T$ be an $\eps$-corrupted set of samples of size $\tilde{\Omega}(d/\eps)$
from a distribution $X$ on $\R^d$ with unknown bounded covariance $\Sigma_X \preceq \sigma^2 I$.
There exists a polynomial time algorithm which given $T$ returns
$\widehat{\mu}$ such that with high probability $\|\widehat{\mu} - \mu_X\|_2 = O(\sigma \sqrt{\eps})$.
\end{corollary}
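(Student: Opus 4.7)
The plan is to deduce the corollary as a direct combination of Proposition~\ref{prop:bounded-cov-good-set} with the general reduction of Theorem~\ref{thm:general-mean}, modulo a rescaling that reduces the unknown covariance case to the covariance-$\preceq I$ setting that those results are formulated in.

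First, I would rescale the data by the known upper-bound parameter $\sigma$. Replacing each input point $x$ by $x/\sigma$ turns an $\eps$-corrupted set drawn from a distribution with covariance $\Sigma_X \preceq \sigma^2 I$ into an $\eps$-corrupted set drawn from $X/\sigma$, which has mean $\mu_X/\sigma$ and covariance $\preceq I$. Any estimator $\widehat{\mu}'$ for the mean of $X/\sigma$ with $\ell_2$-error $O(\sqrt{\eps})$ then yields, after multiplying by $\sigma$, an estimator $\widehat{\mu} = \sigma \widehat{\mu}'$ for $\mu_X$ with $\ell_2$-error $O(\sigma\sqrt{\eps})$, as desired.

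Next, I would apply Proposition~\ref{prop:bounded-cov-good-set} to the clean (rescaled) sample set $S$. Taking $|S| = \tilde{\Omega}(d/\eps)$ with a sufficiently large hidden constant, the proposition guarantees that, with high probability, $S$ is $(3\eps, O(\sqrt{\eps}))$-stable with respect to $X/\sigma$. (The factor of $3$ is immaterial: invoking the proposition with $3\eps$ in place of $\eps$ changes the required sample size and the resulting $\delta$ only by constant factors.) The corrupted input $T$ is by hypothesis an $\eps$-corrupted version of $S$.

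Finally, I would feed $T$ into the polynomial-time algorithm guaranteed by Theorem~\ref{thm:general-mean}, applied with stability parameters $(3\eps, \delta)$ for $\delta = O(\sqrt{\eps})$. The theorem outputs $\widehat{\mu}'$ with $\|\widehat{\mu}' - \mu_{X/\sigma}\|_2 = O(\delta) = O(\sqrt{\eps})$, and undoing the rescaling yields the claimed bound $\|\widehat{\mu} - \mu_X\|_2 = O(\sigma\sqrt{\eps})$. There is no real obstacle in this derivation — all of the substantive work has already been done, the stability analysis under only a bounded-covariance assumption in Proposition~\ref{prop:bounded-cov-good-set}, and the convex-programming-based estimator in Theorem~\ref{thm:general-mean}. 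The only point that merits any attention is checking that the rescaling is legitimate, which it is because stability (Definition~\ref{def:stability}) and the separating-hyperplane construction in Section~\ref{ssec:convex-program} are preserved under the linear map $x \mapsto x/\sigma$.
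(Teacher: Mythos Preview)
Your proposal is correct and matches the paper's approach exactly: the paper states this corollary as an immediate consequence of combining Proposition~\ref{prop:bounded-cov-good-set} with Theorem~\ref{thm:general-mean}, and explicitly remarks (after Definition~\ref{def:stability}) that the general bounded-covariance case reduces to $\Sigma_X \preceq I$ via a linear transformation --- precisely the rescaling by $1/\sigma$ that you spell out.
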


By the discussion following Fact~\ref{fact:error-limit}, the above error bound is also information-theoretically optimal,
within constant factors.


\subsection{The Filtering Method}\label{ssec:filter}

As in the unknown convex programming method, the goal of the filtering method is
to find a distribution $W\in \mathcal{C}$ so that $\Sigma_W$ has bounded eigenvalues.
Given a $W\in \mathcal{C}$, $\Sigma_W$ either has bounded eigenvalues (in which case the weighted empirical mean works)
or there is a direction $v$ in which $\Var[v\cdot W]$ is too large. In the latter case, the projections $v\cdot W$
must behave very differently from the projections $v\cdot S$ or $v\cdot X$. In particular,
since an $\eps$-fraction of outliers are causing a much larger increase in the standard deviation,
this means that the distribution of $v\cdot W$ will have many ``extreme points'' --- more than one would expect to find in $v\cdot S$.
This fact allows us to identify a non-empty subset of extreme points the majority of which are outliers.
These points can then be removed (or down-weighted) in order to ``clean up'' our sample.
Formally, given a $W\in \mathcal{C}$ without bounded eigenvalues, we can efficiently
find a $W' \in \mathcal{C}$ so that $W'$ is closer to $W^\ast$ than $W$ was.
Iterating this procedure eventually terminates giving a $W$ with bounded eigenvalues.

We note that while it may be conceptually useful to consider the above scheme for general distributions $W$ over points,
in most cases it suffices to consider only $W$ given as the uniform distribution over some set of points.
The filtering step in this case consists of replacing the set $T$ by some subset $T' = T \setminus R$, where $R\subset T$.
To guarantee progress towards $W^{\ast}$ (the uniform distribution over $S \cap T$),
it suffices to ensure that at most a third of the elements of $R$ are also in $S$, or equivalently that at least
two thirds of the removed points are outliers (perhaps in expectation). {The algorithm will terminate
when the current set of points $T'$ has bounded empirical covariance, and the output will be the empirical
mean of $T'$.}

Before we proceed with a more detailed technical discussion, we note that there are several possible ways to implement the filtering step, and that the method used has a significant impact on the analysis. In general, a filtering step removes
all points that are ``far'' from the sample mean in a large variance direction. However, the precise way that this is
quantified can vary in important ways.

\subsubsection{Basic Filtering} \label{ssec:filter-basic}

In this subsection, we present a filtering method that yields efficient robust mean estimators with 
optimal error bounds for identity covariance
(or, more generally, known covariance) distributions whose univariate projections
satisfy appropriate tail bounds.
For the purpose of this section, we will restrict ourselves to the Gaussian setting.
We note however that this method immediately extends to distributions with weaker concentration properties,
e.g., sub-exponential or even inverse polynomial concentration, with appropriate modifications.

We note that the filtering method presented here requires an additional condition
on our good set of samples, on top of the stability condition. This is quantified
in the following definition:

\begin{definition} \label{def:tails}
A set $S \subset \R^d$ is {\em tail-bound-good (with respect to $X=\mathcal{N}(\mu_X,I)$)}
if for any unit vector $v$, and any $t>0$, we have
\begin{equation}\label{tailEqn}
\pr_{x\sim_u S}\left[ |v\cdot(x -\mu_X)| > 2t+2 \right] \leq e^{-t^2/2} \;.
\end{equation}
\end{definition}

Since any univariate projection of $X$ is distributed like a standard Gaussian, Equation~(\ref{tailEqn})
should hold if the uniform distribution over $S$ were replaced by $X$. It can be shown that
this condition holds with high probability if $S$ consists of i.i.d. random samples from $X$
of a sufficiently large polynomial size~\cite{DKKLMS16}.


Intuitively, the additional tail condition of Definition~\ref{def:tails} is needed
to guarantee that the filtering method used here will remove more outliers than inliers.
Formally, we have the following:
\begin{lemma} \label{lem:basic-filtering}
Let $\eps>0$ be a sufficiently small constant.
Let $S \subset \R^d$ be both $(2\eps,\delta)$-stable and tail-bound-good with respect to
$X=\mathcal{N}(\mu_X,I)$, with $\delta = c \eps\sqrt{\log(1/\eps)}$, for $c>0$ a sufficiently large constant.
Let $T \subset \R^d$ be such that $|T\cap S| \geq (1-\eps)\max( |T|,|S|)$
and assume we are given a unit vector $v \in \R^d$ for which $\Var[v\cdot T] > 1+2\delta^2/\eps$.
There exists a polynomial time algorithm that returns a subset $R\subset T$ satisfying $|R\cap S| < |R|/3$.
\end{lemma}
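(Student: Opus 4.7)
The plan is to produce $R$ by thresholding in the $v$-direction: pick a cutoff $\tau^* > 0$ and return $R := \{x \in T : |v \cdot (x - \mu_T)| > \tau^*\}$. I will show by a pigeonhole/contradiction argument that some $\tau^* \geq \Omega(\sqrt{\log(1/\eps)})$ makes this $R$ satisfy the required $|R \cap S| < |R|/3$. The key objects to compare are $f_T(\tau) := \pr_{x \sim_u T}[|v \cdot (x-\mu_T)| > \tau]$ and $f_S(\tau) := \pr_{x \sim_u S}[|v \cdot (x - \mu_X)| > \tau]$. Stability of $S$ applied to $S' := S \cap T$ (which has density at least $1 - 2\eps$ in $S$) gives $|v \cdot (\mu_{S'} - \mu_X)| \leq \delta$ and $\Var_{x \sim_u S'}[v \cdot x] \leq 1 + \delta^2/\eps$, while Definition~\ref{def:tails} supplies the Gaussian tail bound $f_S(2t+2) \leq e^{-t^2/2}$.

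The first step is to bound the shift $\Delta := |v \cdot (\mu_T - \mu_X)|$. Since $\mu_T = (1 - \eps')\mu_{S'} + \eps' \mu_{T\setminus S}$ with $\eps' := |T \setminus S|/|T| \leq \eps$, and since the variance sum restricted to $T \setminus S$ lower bounds $\Var[v \cdot T]$, a short calculation yields $\Delta \leq \delta + O(\sqrt{\eps \cdot \Var[v \cdot T]})$. After a standard naive-pruning pre-processing step that caps $\Var[v \cdot T]$ at $\poly(\log(1/\eps))$, this gives $\Delta \ll \tau_0 := C\sqrt{\log(1/\eps)}$ for a suitably large constant $C$. Thus the pointwise comparison $|R \cap S| \leq f_S(\tau^* - \Delta) \cdot |S|$, which follows because any $x \in S \cap R$ satisfies $|v \cdot (x - \mu_X)| > \tau^* - \Delta$, remains useful at every $\tau^* \geq \tau_0$.

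Now I execute the pigeonhole. Assume for contradiction that no $\tau^* \geq \tau_0$ yields $|R \cap S| < |R|/3$, i.e., for every such $\tau^*$ one has $f_T(\tau^*)|T| \leq 3 f_S(\tau^* - \Delta)|S|$. Split $\Var_{x \sim_u T}[v \cdot x] = \int_0^\infty 2\tau f_T(\tau) d\tau$ at $\tau_0$. The tail piece is at most $3(|S|/|T|) \int_{\tau_0}^\infty 2\tau e^{-((\tau - \Delta)/2 - 1)^2/2} d\tau = o(\delta^2/\eps)$ by the Gaussian decay. The bulk piece $\int_0^{\tau_0} 2\tau f_T(\tau) d\tau$ is handled by splitting $T = S' \sqcup (T \setminus S)$: the $S'$-contribution is at most $\Var_{x \sim_u S'}[v \cdot x] \leq 1 + \delta^2/\eps$ (up to the centering slack $\delta$ from stability), and the $(T \setminus S)$-contribution is at most $\eps \tau_0^2 = O(\eps \log(1/\eps))$. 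Summing, $\Var[v \cdot T] \leq 1 + O(\delta^2/\eps) + O(\eps \log(1/\eps)) = 1 + O(\delta^2/\eps)$, contradicting the hypothesis once the constant $c$ in $\delta = c\eps\sqrt{\log(1/\eps)}$ is chosen sufficiently large. Thus a good $\tau^*$ exists, and the algorithm finds it by scanning the $O(|T|)$ candidate values $|v \cdot (x - \mu_T)|$ for $x \in T$ and checking the condition $|R \cap S| < |R|/3$ implicitly via $f_T(\tau)|T| > 3 f_S(\tau - \Delta)|S|$ (or, equivalently, via an overshoot in the empirical tail relative to the Gaussian envelope guaranteed by Definition~\ref{def:tails}).

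The main technical obstacle is the two-sided variance accounting: the stability hypothesis controls the near-mean contribution to $\Var[v \cdot T]$ through $S'$, while the tail-bound-good hypothesis controls the far contribution through the Gaussian tail, and the shift $\Delta$ couples the two and must be bounded uniformly. The quantitative calibration $\delta = c\eps\sqrt{\log(1/\eps)}$ with $c$ a large constant is precisely what forces the tail piece to be $o(\delta^2/\eps)$ and the bulk piece to be at most $1 + \delta^2/\eps(1 + o(1))$, so that their sum strictly undercuts the assumed excess $2\delta^2/\eps$ and the pigeonhole closes with room to spare.
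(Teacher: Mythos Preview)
Your approach---threshold $|v\cdot(x-\mu_T)|$ and derive a contradiction via the integral representation of $\Var[v\cdot T]$---is the same strategy the paper uses. The one substantive difference is how you control the shift $\Delta = |v\cdot(\mu_T-\mu_X)|$. You invoke an extra ``naive-pruning pre-processing step'' to cap $\Var[v\cdot T]$ at $\poly(\log(1/\eps))$ and hence force $\Delta \ll \tau_0$. The paper does not preprocess: it applies Lemma~\ref{lem:stability} to get $\Delta \leq c\sqrt{\lambda\eps}$ with $1+\lambda=\Var[v\cdot T]$, and then \emph{builds this $\lambda$-dependent offset into the threshold}, setting $R=\{x:|v\cdot(x-\mu_T)|>2t_0+2+c\sqrt{\lambda\eps}\}$ and looking for $t_0$ with
\[
\pr_{x\sim_u T}\big[|v\cdot(x-\mu_T)|>2t_0+2+c\sqrt{\lambda\eps}\big]>4e^{-t_0^2/2}.
\]
This works uniformly in $\lambda$ and proves the lemma exactly as stated, with no modification of $T$.

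Your route can be made rigorous, but as written there is a loose end in the bulk piece. You say the $S'$-contribution has centering slack ``$\delta$ from stability,'' but stability controls $|v\cdot(\mu_{S'}-\mu_X)|$; what actually enters is $(v\cdot(\mu_{S'}-\mu_T))^2 = (\eps')^2(v\cdot(\mu_{S'}-\mu_{T\setminus S}))^2 \leq 2\eps(1+\lambda)$. This is $O(\eps\lambda)$, not $O(\delta^2)$, and if your preprocessing only caps $\lambda$ at a generic $\poly(\log(1/\eps))$ then this term can exceed $\delta^2/\eps = c^2\eps\log(1/\eps)$. The paper sidesteps this by a dual formulation: rather than upper-bounding the full variance, it lower-bounds the $T\setminus S$ contribution by $|T|\lambda/3$ (the $O(\eps\lambda)$ centering term is absorbed into the coefficient of $\lambda$), and then bounds only that piece by the tail integral. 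This is the cleaner way to close the pigeonhole without any a~priori cap on $\lambda$.
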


To see why Lemma~\ref{lem:basic-filtering} suffices for our purposes,
note that by replacing $T$ by $T' = T\setminus R$, we obtain a less noisy version of $S$ than $T$ was.
In particular, it is easy to see that the size of the symmetric difference between $S$ and $T'$
is strictly smaller than the size of the symmetric difference between $S$ and $T$.
From this it follows that the hypothesis $|T\cap S| \geq (1-\eps)\max( |T|,|S|)$
still holds when $T$ is replaced by $T'$, allowing us to iterate this process
until we are left with a set with small variance.

\begin{proof}
Let $\Var[v\cdot T]=1+\lambda$.
By applying {Lemma \ref{lem:stability} to the set $T$},
we get that $|v\cdot \mu_X - v\cdot \mu_T| \leq  c \sqrt{\lambda \eps}$.
By \eqref{tailEqn}, it follows that
$\pr_{x \sim_u S}\left[ |v\cdot(x -\mu_T)| > 2t+2+c\sqrt{\lambda \epsilon} \right] \leq e^{-t^2/2}.$
We claim that there exists a threshold $t_0$ such that
\begin{equation}\label{tailviolationEqn}
\pr_{x\sim_u T}\left[ |v\cdot(x -\mu_T)| > 2t_0+2+c\sqrt{\lambda \epsilon} \right] >  4e^{-t_0^2/2} \;,
\end{equation}
{where the constants have not been optimized.}
Given this claim, the set $R=\{x\in T: |v\cdot(x -\mu_T)| > 2t_0+2+c\sqrt{\lambda \eps}\}$ will satisfy the
conditions of the lemma.

To prove our claim, we analyze the variance of $v\cdot T$ and note that
much of the excess must be due to points in $T\setminus S$. In particular, by our assumption
on the variance in the $v$-direction,
$\sum_{x\in T} |v\cdot(x -\mu_T)|^2 = |T|\Var[v\cdot T] = |T|(1+\lambda)$,
where $\lambda>2\delta^2/\eps$.
The contribution from the points $x\in S\cap T$ is at most
\begin{align*}
\sum_{x\in S} |v\cdot(x -\mu_T)|^2 & = |S| \left( \Var[v\cdot S] + |v\cdot(\mu_T-\mu_S)|^2 \right)
\leq |S|(1+\delta^2/\epsilon +2c^2\lambda\eps)\\
& \leq |T|(1+2c^2\lambda\eps+{3\lambda/5}) \;,
\end{align*}
{where the first inequality uses the stability of $S$, and the last uses that $|T|\geq (1-\eps)|S|$.}
If $\eps$ is sufficiently small relative to $c$, it follows that
$\sum_{x\in T \setminus S} |v\cdot(x -\mu_T)|^2 \geq |T|\lambda/3.$
On the other hand, by definition we have:
\begin{equation} \label{eqn:variance-outliers}
\sum_{x\in T \setminus S} |v\cdot(x -\mu_T)|^2 = |T|\int_0^\infty 2t \pr_{x\sim_u T}\left[ |v\cdot (x-\mu_T)|>t, x\not\in S \right] dt.
\end{equation}
Assume for the sake of contradiction that
there is no $t_0$ for which Equation \eqref{tailviolationEqn} is satisfied.
Then the RHS of \eqref{eqn:variance-outliers} is at most
\begin{align*}
& |T| \left(\int_0^{2+{c}\sqrt{\lambda\eps}+10\sqrt{\log(1/\eps)}} 2t \pr_{x\sim_u T}[x\not\in S] dt +
\int_{2+c\sqrt{\lambda\eps}+10\sqrt{\log(1/\eps)}}^\infty 2t \pr_{x\sim_u T}\left[ |v\cdot (x-\mu_T)|>t \right] dt \right)\\
\leq & |T|\left(\eps(2+c\sqrt{\lambda\eps}+10\sqrt{\log(1/\eps)})^2+
\int_{5\sqrt{\log(1/\eps)}}^\infty 16(2t+2+c\sqrt{\lambda\eps})e^{-t^2/2}dt  \right)\\
\leq & |T|\left(O(c^2 \lambda \eps^2 + \eps\log(1/\eps))+O(\eps^2(\sqrt{\log(1/\eps)}+c\sqrt{\lambda\eps})) \right)\\
\leq & |T|O(c^2\lambda \eps^2 + (\delta^2/\eps)/c)<  |T|\lambda/3 \;,
\end{align*}
which is a contradiction.
Therefore, the tail bounds and the concentration violation together imply
the existence of such a $t_0$ (which can be efficiently computed).
\end{proof}

\subsubsection{Randomized Filtering} \label{ssec:filter-rand}
The basic filtering method of the previous subsection is deterministic,
relying on the violation of a concentration inequality satisfied by the inliers.
In some settings, {deterministic filtering seems to fail to give optimal results
and we require the filtering procedure to be randomized.
A concrete such setting is when the uncorrupted distribution
is only assumed to have bounded covariance.

The main idea of randomized filtering is simple:
Suppose we can identify a non-negative function $f(x)$, defined on the samples $x$,
for which (under some high probability condition on the inliers) it holds that
$\sum_{T} f(x) \geq 2 \sum_{S}f(x)$, where $T$ is an $\eps$-corrupted set
of samples and $S$ is the corresponding set of inliers. Then
we can create a randomized filter by removing each sample point $x \in T$ with probability proportional to $f(x)$.
This ensures that the {\em expected} number of outliers removed is at least the {\em expected} number of inliers
removed. The analysis of such a randomized filter is slightly more subtle, so we will discuss it in the following
paragraphs.

The key property the above randomized filter ensures is that the sequence of random variables
$(\textrm{\# Inliers removed})-(\textrm{\# Outliers removed})$ {(where ``inliers'' are points in $S$ and ``outliers'' points
in $T\setminus S$)} across iterations is a sub-martingale.
Since the total number of outliers removed across all iterations accounts for at most an $\eps$-fraction of the total samples,
this means that with probability at least $2/3$, at no point does the algorithm remove more than a $2\eps$-fraction
of the inliers. A formal statement follows:

\begin{theorem} \label{thm:rand-filter}
Let $S \subset \R^d$ be a $(6\eps,\delta)$-stable set (with respect to $X$)
and $T$ be an $\eps$-corrupted version of $S$.
Suppose that given any $T' \subseteq T$ with $|T'\cap S| \geq (1-6\epsilon)|S|$
for which $\Cov[T']$ has an eigenvalue bigger than $1+\lambda$, for some $\lambda \geq 0$,
there is an efficient algorithm that computes a non-zero function
$f:T'\rightarrow \R_+$ such that $\sum_{x\in T'}f(x) \geq 2 \sum_{x\in T'\cap S} f(x)$.
Then there exists a polynomial time randomized algorithm that computes a vector $\widehat{\mu}$
that with probability at least $2/3$ satisfies $\|\widehat{\mu} - \mu_X\|_2 = O(\delta+\sqrt{\eps \lambda}).$
\end{theorem}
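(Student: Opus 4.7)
The natural algorithm is an iterative randomized filter. Start with $T^{(0)} = T$. At step $k$, compute the spectral decomposition of $\Sigma_{T^{(k)}}$: if the top eigenvalue is at most $1+\lambda$, output $\widehat{\mu}=\mu_{T^{(k)}}$; otherwise, invoke the hypothesized subroutine to obtain $f: T^{(k)}\to \R_+$, set $f_{\max}=\max_{x\in T^{(k)}} f(x)$, and form $T^{(k+1)}$ by independently deleting each $x\in T^{(k)}$ with probability $f(x)/f_{\max}$. The point attaining the maximum is deleted deterministically, so $|T^{(k+1)}|\leq |T^{(k)}|-1$ and the algorithm terminates in at most $|T|$ rounds.

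The core of the proof is a supermartingale argument that ensures the subroutine's hypothesis $|T^{(k)}\cap S|\geq (1-6\eps)|S|$ is maintained. Let $I_k$ and $O_k$ be the cumulative numbers of deleted inliers (points of $S$) and outliers (points of $T\setminus S$); note $O_k\leq |T\setminus S|\leq \eps|S|$ always. The key guarantee on $f$ yields
\[
\E[I_{k+1}-I_k\mid \mathcal{F}_k]=\sum_{x\in T^{(k)}\cap S}\frac{f(x)}{f_{\max}}\leq \sum_{x\in T^{(k)}\setminus S}\frac{f(x)}{f_{\max}}=\E[O_{k+1}-O_k\mid \mathcal{F}_k],
\]
so $Z_k:=I_k-O_k$ is a supermartingale with $Z_0=0$. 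Let $\tau$ be the first time either the algorithm halts or $I_k>5\eps|S|$; since $\tau$ is deterministically bounded by $|T|$, optional stopping gives $\E[Z_\tau]\leq 0$, hence $\E[I_\tau]\leq \E[O_\tau]\leq \eps|S|$. Markov's inequality yields $\Pr[I_\tau>5\eps|S|]\leq 1/5$. On the complementary good event (probability $\geq 4/5\geq 2/3$), $I_k\leq 5\eps|S|$ for \emph{every} $k$, so $|T^{(k)}\cap S|\geq (1-\eps)|S|-5\eps|S|=(1-6\eps)|S|$ throughout and the subroutine is always applicable.

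On the good event the algorithm outputs $\widehat{\mu}=\mu_{T'}$ where $T'=T^{(K)}$ satisfies $\Sigma_{T'}\preceq (1+\lambda)I$. Because $|S\setminus T'|\leq 5\eps|S|$ and $|T'\setminus S|\leq \eps|S|$, while $|T'|$ and $|S|$ differ by at most $5\eps|S|$, the empirical distribution $W=U_{T'}$ has total variation distance $O(\eps)$ from $U_S$. Applying Lemma~\ref{lem:stability-full} with stability parameter $6\eps$ (which holds by hypothesis on $S$) then gives $\|\widehat{\mu}-\mu_X\|_2=\|\mu_W-\mu_X\|_2=O(\delta+\sqrt{\eps\lambda})$, as claimed.

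The only genuinely subtle step is the supermartingale analysis. The trick is to stop at the \emph{first} violation of the invariant rather than at termination: this makes $\tau$ deterministically bounded (so optional stopping applies without integrability issues) and converts a per-step expectation guarantee into a uniform-in-$k$ high-probability guarantee. Everything else---deterministic termination in $|T|$ rounds, applicability of the stability-based Lemma~\ref{lem:stability-full} at the end, and the translation from $|T'\cap S|\geq (1-6\eps)|S|$ to $\dtv(U_{T'},U_S)=O(\eps)$---is routine.
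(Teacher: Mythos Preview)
Your proof is correct and follows essentially the same approach as the paper. The paper tracks the supermartingale $d(T^{(k)})=|S\,\Delta\,T^{(k)}|$, which equals $d(T^{(0)})+I_k-O_k$, so your $Z_k=I_k-O_k$ is the same quantity up to an additive constant; the paper then applies Markov to $d$ (starting value $\le 2\eps|S|$, threshold $6\eps|S|$) while you apply it to $I_\tau$ (bound $\eps|S|$, threshold $5\eps|S|$), and the paper invokes Lemma~\ref{lem:stability} rather than Lemma~\ref{lem:stability-full} at the end, but these are cosmetic differences. One tiny arithmetic slip: you write $|S\setminus T'|\le 5\eps|S|$, but since $|S\setminus T|\le \eps|S|$ initially and $I_K\le 5\eps|S|$, the correct bound is $|S\setminus T'|\le 6\eps|S|$; this does not affect the $O(\eps)$ total variation estimate or the conclusion.
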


The algorithm is described in pseudocode below:

\bigskip

\fbox{\parbox{6in}{
{\bf Algorithm} {\tt Randomized Filtering}
\begin{enumerate}
\item Compute $\Cov[T]$ and its largest eigenvalue $\nu$.
\item If $\nu \leq 1+ \lambda$, return $\mu_T$.
\item Else
\begin{itemize}
\item Compute $f$ as guaranteed in the theorem statement.
\item Remove each $x \in T$ with probability $f(x)/\max_{x\in T} f(x)$ and return to Step $1$ with the new set $T$.
\end{itemize}
\end{enumerate}
}}

\medskip

\begin{proof}[Proof of Theorem~\ref{thm:rand-filter}]
First, it is easy to see that this algorithm runs in polynomial time.
Indeed, as the point $x \in T$ attaining the maximum value of $f(x)$ is definitely removed in each filtering iteration,
each iteration reduces $|T|$ by at least {one}.
To establish correctness, we will show that, with probability at least $2/3$,
at each iteration of the algorithm it holds $|S\cap T| \geq (1-6\epsilon)|S|$.
Assuming this claim, {Lemma \ref{lem:stability}} implies that our final error will be as desired.

To prove the desired claim, we consider the sequence of random variables
$d(T) = |S \Delta T| = |S\backslash T| + |T\backslash S|$ across the iterations of the algorithm.
We note that, initially, $d(T)=2\eps|S|$ and that $d(T)$ cannot drop below $0$.
Finally, we note that at each stage of the algorithm
$d(T)$ \new{decreases} by $(\textrm{\# Inliers removed})-(\textrm{\# Outliers removed})$,
and that the expectation of this quantity is
$$
\sum_{x\in S \setminus T}f(x) - \sum_{x\in T \setminus S}f(x) = 2\sum_{x\in S\cap T}f(x) - \sum_{x\in T}f(x) \leq 0.
$$
This means that $d(T)$ is a sub-martingale (at least until we reach a point where $|S\cap T| \leq (1-6\eps)|S|$).
However, if we set a stopping time at the first occasion where this condition fails, we note that the expectation
of $d(T)$ is at most $2\eps |S|$. Since it is at least $0$, this means that with probability at least $2/3$
it is never more than $6\eps|S|$, which would imply that $|S\cap T|\geq (1-6\epsilon)|S|$ {throughout the algorithm}.
If this is the case, the inequality $|T'\cap S| \geq (1-6\epsilon)|S|$ will continue
to hold throughout our algorithm, thus eventually yielding such
a set with the variance of $T'$ bounded. By Lemma \ref{lem:stability},
the mean of this $T'$ will be a suitable estimate for the true mean.
\end{proof}

\paragraph{Methods of Point Removal.}
The randomized filtering method described above 
{only requires} that each point $x$ is removed with probability $f(x)/\max_{x\in T} f(x)$,
without any assumption of independence.
{Therefore, given an $f$}, there are several ways to implement this scheme.
{A few natural ones are given here:}
\begin{itemize}
\item {\em Randomized Thresholding:}
Perhaps the easiest method for implementing our randomized filter
is generating a uniform random number $y\in [0,\max_{x\in T} f(x)]$
and removing all points $x \in T$ for which $f(x)\geq y$. This method is
practically useful in many applications. Finding the set of such points
is often fairly easy, as this condition may well correspond to a simple threshold.

\item {\em Independent Removal:} Each $x \in T$ is removed independently with probability $f(x)/\max_{x\in T} f(x)$.
This scheme has the advantage of leading to less variance in $d(T)$. A careful analysis of the random walk
involved allows one to reduce the failure probability to $\exp(-\Omega(\eps |S|))$.

\item {\em Deterministic Reweighting:} Instead of removing points, this scheme allows for weighted sets of points.
In particular, each point will be assigned a weight in $[0,1]$ and we will consider weighted means and covariances.
Instead of removing a point with probability proportional to $f(x)$, we can remove a fraction of $x$'s weight proportional to $f(x)$.
This ensures that the appropriate weighted version of $d(T)$ is definitely non-increasing, implying deterministic correctness of the algorithm.
\end{itemize}

\paragraph{\bf Practical Considerations.}
While the aforementioned point removal methods
have similar theoretical guarantees,
recent implementations~\cite{DiakonikolasKKLSS2018sever}
suggest that they have different practical performance on real datasets. The
deterministic reweighting method is somewhat slower in practice as its worst-case
runtime and its typical runtime are comparable. In more detail, one can guarantee
termination by setting the constant of proportionality so that at each step
at least one of the non-zero weights is set to zero. However, in practical circumstances,
we will not be able to do better. That is, the algorithm may well be forced to
undergo $\eps |S|$ iterations. On the other hand,
the randomized  versions of the algorithm are likely to remove several points of $T$
at each filtering step.

Another reason why the randomized versions may be preferable has to do with the quality of the results.
The randomized algorithms only produce bad results when there is a chance that $d(T)$ ends up being very large.
However, since $d(T)$ is a sub-martingale, this will only ever be the case if there is a corresponding possibility
that $d(T)$ will be exceptionally small. Thus, although the randomized algorithms may have a probability of giving
worse results some of the time, this will only happen if a corresponding fraction of the time,
they also give \emph{better} results than the theory guarantees. This consideration suggests
that the randomized thresholding procedure might have advantages
over the independent removal procedure precisely because it has a higher probability of failure. This has been observed
experimentally in~\cite{DiakonikolasKKLSS2018sever}: {In real datasets (poisoned with a constant fraction of
adversarial outliers), the number of iterations of randomized filtering is typically bounded by a small constant.}

\subsubsection{Universal Filtering} \label{ssec:filter-univ}

In this subsection, we show how to use randomized filtering
to construct a universal filter that works under only the stability condition {(Definition~\ref{def:stability}) ---
not requiring the tail-bound condition of the basic filter (Lemma~\ref{lem:basic-filtering})}.
Formally, we show:

\begin{proposition} \label{prop:univ-filter}
Let $S \subset \R^d$ be an $(\eps,\delta)$-stable set for $\eps, \delta>0$ sufficiently small constants
and $\delta$ at least a sufficiently large multiple of $\eps$. Let $T$ be {an $\eps$-corrupted version of}
$S$. Suppose that $\Cov[T]$ has largest eigenvalue
$1+\lambda > 1+8\delta^2/\eps$. Then there exists a computationally efficient
algorithm that, on input $\epsilon,\delta, T$, 
computes a non-zero function $f: T \rightarrow \R_+$ satisfying $\sum_{x\in T}f(x) \geq 2 \sum_{x\in T\cap S} f(x)$.
\end{proposition}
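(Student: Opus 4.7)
My plan is to build $f$ from the quadratic score along the top eigendirection of $\Cov[T]$. Let $v\in\R^d$ be a unit top eigenvector, so $v^\top \Cov[T]\, v = 1+\lambda$, and set $h(x) := (v\cdot(x-\mu_T))^2$. The natural first attempt is $f(x)=h(x)$, for which $\sum_{x\in T}f(x) = |T|(1+\lambda)$ is immediate; the task then reduces to controlling $\sum_{x\in T\cap S}f(x)$ from above.

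For the inlier sum I would use the within/between identity $\sum_{T\cap S}h(x) = |T\cap S|\,\Var_{T\cap S}[v\cdot x] + |T\cap S|(v\cdot(\mu_{T\cap S}-\mu_T))^2$. The second stability condition, applied to the $(1-\eps)$-subset $T\cap S\subseteq S$, bounds the first term by $|T\cap S|(1+\delta^2/\eps)$. For the shift I combine the first stability condition $|v\cdot(\mu_{T\cap S}-\mu_X)|\le\delta$ with the guarantee $\|\mu_T-\mu_X\|_2 = O(\delta+\sqrt{\eps\lambda})$ from Lemma~\ref{lem:stability} via the triangle inequality, obtaining $(v\cdot(\mu_{T\cap S}-\mu_T))^2 = O(\delta^2+\eps\lambda)$. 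After using $\eps\lambda\le 8\delta^2$ (from $\lambda>8\delta^2/\eps$) and the hypothesis that $\delta$ is a sufficiently large multiple of $\eps$ (so $\delta^2 \le \delta^2/\eps$), the bound simplifies to $\sum_{T\cap S}h \le |T|(1 + C\delta^2/\eps)$ for an explicit constant $C$ coming from Lemma~\ref{lem:stability}.

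The main obstacle is converting this into the required factor-of-$2$ inequality. With $f=h$, the desired $1+\lambda \ge 2(1+C\delta^2/\eps)$ is implied by $\lambda>8\delta^2/\eps$ only once the multiplier in ``$\delta$ a sufficiently large multiple of $\eps$'' is chosen large enough relative to $C$ to absorb the additive ``$1$'' into $\delta^2/\eps$; if this absorption is too tight, I would instead use the shifted score
\[
f(x) \;:=\; \bigl(h(x)-(1+\delta^2/\eps)\bigr)_+ ,
\]
whose lower bound $\sum_T f \ge \sum_T(h-(1+\delta^2/\eps)) = |T|(\lambda-\delta^2/\eps) \ge \tfrac{7}{8}|T|\lambda$ follows from $(a)_+\ge a$ together with the hypothesis on $\lambda$.

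The delicate step is then the matching upper bound $\sum_{T\cap S}(h(x)-(1+\delta^2/\eps))_+ \le O(|T|\delta^2/\eps)$. I would prove it by exploiting both directions of the second stability condition: combining $\sum_S h_X \le (1+\delta^2/\eps)|S|$ with $\sum_{S'}h_X \ge (1-\delta^2/\eps)|S'|$ applied to $S' := S \setminus (\text{top } \eps|S| \text{ values of } h_X)$ shows that the top $\eps|S|$ values of $h_X(x):=(v\cdot(x-\mu_X))^2$ on $S$ sum to at most $O(|S|\delta^2/\eps)$, and a short bookkeeping step propagates this from $h_X$ to $h$ using the shift $\mu_T-\mu_X$ controlled by Lemma~\ref{lem:stability}. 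The hardest subcase is when more than $\eps|S|$ inliers satisfy $h(x)>1+\delta^2/\eps$: there the top-$\eps$ tail bound does not directly apply, and one has to argue using both sides of stability on $T\cap S$ itself that the individual excesses $(h-L)_+$ must be small on average, keeping the total $O(|T|\delta^2/\eps)$ and thereby yielding $\sum_Tf \ge 2\sum_{T\cap S}f$ as required.
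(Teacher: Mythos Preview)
Your setup is right: the quadratic score $h(x)=(v\cdot(x-\mu_T))^2$ along the top eigendirection is exactly what the paper uses, and your bounds $\sum_T h=|T|(1+\lambda)$ and $\sum_{T\cap S} h\le |T|(1+C\delta^2/\eps)$ via stability plus Lemma~\ref{lem:stability} are correct. You also correctly diagnose that $f=h$ fails: the inequality $1+\lambda\ge 2(1+C\delta^2/\eps)$ would need $\delta^2/\eps=\Omega(1)$, which ``$\delta\ge M\eps$'' does not give (it only gives $\delta^2/\eps\ge M^2\eps$, still small).

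The gap is in your fallback $f(x)=(h(x)-L)_+$ with $L=1+\delta^2/\eps$. In what you call the ``hardest subcase,'' the bound $\sum_{T\cap S}(h-L)_+=O(|T|\delta^2/\eps)$ simply need not hold under stability alone. Consider inliers whose projections $v\cdot(x-\mu_X)$ take only the values $0,\pm\sqrt{2}$ with frequencies $1/2,1/4,1/4$: the second moment is exactly $1$, and removing any $\eps$-fraction perturbs the first and second moments by $O(\eps)$, so $(\eps,\delta)$-stability holds whenever $\delta$ is a sufficiently large multiple of $\eps$. After a small shift to $\mu_T$, half of $T\cap S$ still has $h\approx 2>L$, giving $\sum_{T\cap S}(h-L)_+\approx |T|/2$. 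Your lower bound $\sum_T f\ge(7/8)|T|\lambda$ then needs $\lambda=\Omega(1)$ to dominate, but the hypothesis only gives $\lambda>8\delta^2/\eps$, which can be arbitrarily small. Stability controls the \emph{mean} of $h$ on large subsets of $S$, not the one-sided excess $(h-L)_+$, so ``both sides of stability on $T\cap S$'' cannot rescue this.

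The paper's fix is to truncate by \emph{rank} rather than by \emph{value}: let $L$ be the set of $\eps|T|$ points in $T$ with the largest $h$-values, and set $f=h\cdot\mathbf{1}_L$. Then $|L|\ge|T\setminus S|$, so $\sum_T f=\sum_L h\ge\sum_{T\setminus S}h\ge \Omega(|T|\lambda)$ (since the outliers carry most of the excess variance). Crucially, $|S\setminus L|\ge(1-\eps)|S|$, so stability applies \emph{directly} to both $S$ and $S\setminus L$, and subtracting gives $\sum_{S\cap T}f=\sum_{S\cap L}h=\sum_S h-\sum_{S\setminus L}h=O(|S|\delta^2/\eps)$. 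This sidesteps the positive-part issue entirely: the quantile truncation forces the inlier contribution to $f$ to live on at most $\eps|S|$ points, exactly the regime where the two-sided stability bound on second moments is tight.
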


By combining Theorem~\ref{thm:rand-filter} and Proposition~\ref{prop:univ-filter}, we obtain
a filter-based algorithm establishing Theorem~\ref{thm:general-mean}.

\begin{proof}[Proof of Proposition~\ref{prop:univ-filter}.]
The algorithm to construct $f$ is the following:
We start by computing the sample mean $\mu_T$ and the top (unit) eigenvector $v$ of $\Cov[T]$.
For $x\in T$, we let $g(x)=(v\cdot (x-\mu_T))^2$. Let $L$ be the set of $\eps \cdot |T|$ elements of $T$
on which $g(x)$ is largest. We define $f$ to be $f(x) = 0$ for $x\not\in L$, and $f(x) = g(x)$ for $x\in L$.

Our basic plan of attack is as follows:
First, we note that the sum of $g(x)$ over $x\in T$ is the  variance of $v \cdot Z$, $Z \sim_u T$, which
is substantially larger than the sum of $g(x)$ over $S$, which is approximately the variance of
$v \cdot Z$, $Z \sim_u S$.
Therefore, the sum of $g(x)$ over the $\eps|S|$ elements of $T \setminus S$ must be quite large.
In fact, using the stability condition, we can show that the latter quantity
must be larger than the sum of the largest $\eps |S|$ values of $g(x)$ over $x\in S$.
However, since $|T \setminus S| \leq |L|$, we have that
$\sum_{x\in T} f(x) = \sum_{x\in L} g(x) \geq \sum_{x\in T\setminus S} g(x) \geq 2 \sum_{x\in S} f(x) \;.$

We now proceed with the detailed analysis.
First, note that $$\sum_{x \in T} g(x) = |T| \Var[v\cdot T] = |T|(1+\lambda) \;.$$
Moreover, for any $S' \subseteq S$ with $|S'| \geq (1-2\eps)|S|$, we have that
\begin{equation}\label{subsetGSumEqn}
\sum_{x\in S'} g(x) = |S'|(\Var[v\cdot S']+(v\cdot (\mu_T-\mu_{S'}))^2).
\end{equation}
By the second stability condition, we have that $|\Var[v\cdot S'] -1| \leq \delta^2/\eps$.
Furthermore, the stability condition and {Lemma \ref{lem:stability}} give
$$\|\mu_T-\mu_{S'}\|_2 \leq \|\mu_T-\mu_X\|_2+\|\mu_X-\mu_{S'}\|_2 = O(\delta+\sqrt{\eps \lambda})\;.$$
Since $\lambda \geq 8\delta^2/\eps$, combining the above gives that
$\sum_{x\in T\setminus S} g(x) \geq (2/3)|S|\lambda$.
Moreover, since $|L|\geq |T \setminus S|$ and since $g$ takes its largest values on points $x \in L$,
we have that
$$
\sum_{x\in T}f(x) = \sum_{x\in L}g(x) \geq \sum_{x\in T \setminus S}g(x) \geq (16/3) |S|\delta^2/\eps \;.
$$
Comparing the results of Equation \eqref{subsetGSumEqn} with $S'=S$ and $S'=S \setminus L$, we find that
\begin{align*}
\sum_{x\in S\cap T}f(x) & = \sum_{x\in S\cap L} g(x) = \sum_{x\in S}g(x) - \sum_{x\in S\setminus L}g(x)\\
& = |S|(1\pm \delta^2/\epsilon + O(\delta^2+\epsilon\lambda))-|S\setminus L|(1\pm \delta^2/\epsilon + O(\delta^2+\epsilon\lambda))\\
& \leq 2|S|\delta^2/\epsilon + |S|O(\delta^2+\epsilon\lambda).
\end{align*}
The latter quantity is at most $(1/2) \sum_{x\in T}f(x)$
when $\delta$ and $\eps/\delta$ are sufficiently small constants.
This completes the proof of Proposition~\ref{prop:univ-filter}.
\end{proof}

\subsection{Bibliographic Notes} \label{ssec:lit-robust-mean}
The convex programming and filtering methods described in this article
appeared in~\cite{DKKLMS16, DKK+17}. Here we gave a simplified and 
unified presentation of these techniques.
The idea of removing
outliers by projecting on the top eigenvector of the empirical covariance
goes back to~\cite{KLS09}, who used it in the context of learning linear separators with malicious noise.
That work \cite{KLS09} used a ``hard'' filtering step which only removes outliers,
and consequently leads to errors that scale logarithmically with the dimension. 
Subsequently, the work of \cite{ABL17} employed a soft-outlier
removal step in the same supervised setting as~\cite{KLS09}, to obtain improved bounds for that problem.
It should be noted that the soft-outlier method of \cite{ABL17} is similarly insufficient
to obtain dimension-independent error bounds for the unsupervised setting.

The work of~\cite{LaiRV16} developed a recursive dimension-halving technique
for robust mean estimation. Their technique leads to error $O(\eps \sqrt{\log(1/\eps)} \sqrt{\log d})$
for Gaussian robust mean estimation in Huber's contamination model.
In short, the algorithm of \cite{LaiRV16} begins by removing any extreme outliers from the input set of $\eps$-corrupted samples. This ensures that, after this basic outlier removal step, the empirical covariance matrix has trace $d(1+\tilde O(\eps))$, which implies that the $d/2$ smallest eigenvalues are all at most $1+\tilde O(\eps)$. This allows \cite{LaiRV16} to show, using techniques akin to Lemma \ref{lem:stability}, that the projections of the true mean and the empirical mean onto the subspace spanned by the corresponding (small) eigenvectors are close. The \cite{LaiRV16} algorithm then uses this approximation
for this projection of the mean, projects the remaining points onto the orthogonal subspace,
and recursively finds the mean of the other projection.

In addition to robust mean and covariance estimation,~\cite{DKKLMS16, LaiRV16} gave robust learning
algorithms for various other statistical tasks, including robust density estimation for mixtures of spherical
Gaussians and binary product distributions, robust independent component analysis (ICA), and robust
singular value decomposition (SVD). Building on the robust mean estimation techniques of~\cite{DKKLMS16},
~\cite{ChengDKS18} gave robust parameter estimation algorithms for Bayesian networks
with known graph structure. Another extension of these results was found by \cite{SteinhardtCV18}
who gave an efficient algorithm for robust mean estimation with respect to all $\ell_p$-norms.

The algorithmic approaches described in this section
robustly estimate the mean of a spherical Gaussian within error $O(\eps \sqrt{\log(1/\eps)})$
in the strong contamination model of Definition~\ref{def:adv}. A more sophisticated filtering technique that achieves
the optimal error of $O(\eps)$ in the additive contamination model was developed in
\cite{DiakonikolasKKLMS18}. Very roughly, this algorithm proceeds by using a novel filter
to remove bad points if the empirical covariance matrix has {\em many} eigenvalues of size $1+\Omega(\eps)$.
Otherwise, the algorithm uses the empirical mean to estimate the mean on the space spanned by small eigenvectors,
and then uses brute force to estimate the projection onto the few principal eigenvectors.
For the strong contamination model, it was shown in~\cite{DKS17-sq} that any improvement
on the $O(\eps \sqrt{\log(1/\eps)})$ error requires super-polynomial time in the Statistical Query model.

Finally, we note that ideas from~\cite{DKKLMS16} have led to
proof-of-concept improvements in the analysis of genetic data~\cite{DKK+17} and in
adversarial machine learning~\cite{DiakonikolasKKLSS2018sever, TranLM18}.

\section{Beyond Robust Mean Estimation} \label{sec:gen}
In this section, we provide an overview of the ideas behind
recently developed robust estimators for more general statistical tasks.
This section follows the structure of a STOC 2019 tutorial by the authors~\cite{DK19-stoc-tutorial}.

\subsection{Robust Stochastic Optimization} \label{ssec:stoch-opt}


It turns out that the algorithmic techniques for high-dimensional robust mean estimation
described in the previous section can be viewed as useful primitives
for robustly solving a range of machine learning problems.
More specifically, we will argue in this section that any efficient robust mean estimator
can be used (in essentially a black-box manner) to obtain efficient robust algorithms
for machine learning tasks that can be expressed as stochastic optimization problems.


In a stochastic optimization problem, we are given samples from an unknown
distribution $\mathcal{F}$ over functions $f: \R^d \to \R$, and our goal is to
find an approximate minimizer of the function $F(w) = \E_{f \sim \mathcal{F}}[f(w)]$
over $\mathcal{W} \subseteq \R^d$.
This framework encapsulates a number of well-studied machine learning problems.
First, we note that the problem of mean estimation can be expressed in this form,
by observing that the mean of a distribution $X$ is the value
$\mu = \arg\min_{w \in \R^d} \E_{x \sim X}[\|w-x\|_2^2]$. That is, given a sample $x \sim X$,
the distribution $\mathcal{F}$ over functions $f_x(w)=\|w-x\|_2^2$ turns
the task of mean estimation into a stochastic optimization problem.
A more interesting example is the problem of least-squares linear regression:
Given a distribution $\D$ over pairs $(x, y)$, where $x \in \R^d$ and $y \in \R$, we want
to find a vector $w \in \R^d$ minimizing $\E_{(x, y) \sim \D}[(w \cdot x - y)^2]$.
Similarly, the problem of linear regression fits in the stochastic optimization framework
by defining the distribution $\mathcal{F}$ over functions
$f_{(x,y)} (w) = (w \cdot x-y^2)$, where  $(x,y) \sim \D$. Similar formulations exist
for numerous other machine learning problems, including $L_1$-regression, logistic regression,
support vector machines, and generalized linear models (see, e.g.,~\cite{DiakonikolasKKLSS2018sever}).
Finally, we note that the stochastic optimization framework encompasses non-convex problems as well.
For example, the general and challenging problem of training a neural net
can be expressed in this framework, where $w$ represents some high-dimensional vector of parameters classifying
the net and each function $f(w)$ quantifies how well that particular net classifies a given data point.


Before we discuss {\em robust} stochastic optimization, we make a few basic remarks
regarding the non-robust setting. We start by noting that, without any assumptions, the problem of optimizing
the function $F(w) = \E_{f \sim \mathcal{F}}[f(w)]$, even approximately, is NP-hard. On the other hand,
in many situations, it suffices to find an approximate critical point of $F$, i.e., a point $w$ such that
$\|\nabla F(w)\|_2$ is small. For example, if $F$ is convex (which holds if each $f \sim \mathcal{F}$ is convex),
an approximate critical point is also an approximate global minimum. For several structured non-convex problems,
an approximate critical point is also considered a satisfactory solution.
On input a set of clean samples, i.e., an i.i.d. set of functions $f_1, \ldots, f_n \sim \mathcal{F}$,
we can efficiently find an approximate critical point of $F$ using (projected) gradient descent.
For more structured problems, e.g., linear regression, faster and more direct methods may be available.


In the robust setting, we have access to an $\eps$-corrupted training
set of functions $f_1, \ldots, f_n$ from $\mathcal{F}$.
Unfortunately, even a {\em single} corrupted sample can completely compromise
the guarantees of gradient descent. The robust version of this problem was first studied by \cite{CSV17}, who considered the problem in the case where a majority of the datapoints are outliers. The vanilla
outlier-robust setting, where $\eps<1/2$, was first studied in two
concurrent works~\cite{PrasadSBR2018, DiakonikolasKKLSS2018sever}.
The main intuition present in both these works
is that robustly estimating the gradient of the objective function can be viewed
as a robust mean estimation problem. As a result, if an efficient robust gradient oracle
is available, we can ``simulate'' gradient descent and compute an approximate critical
point of $F$. Note that this method employs a robust mean estimation algorithm
at every step of gradient descent.


The work of~\cite{DiakonikolasKKLSS2018sever} also proposed an alternative approach, which
turns out to be much faster in practice.
Instead of using a robust gradient estimator as a black-box, one uses {\em any} approximate
empirical risk minimizer (ERM) in conjunction with the filtering algorithm for robust mean estimation
of the previous section. This method only requires black-box access to an approximate ERM and
calls the filtering routine only when the ERM reaches an approximate critical point.
The correctness of this algorithm relies on structural properties of the filtering method.
Roughly speaking, the main idea is as follows:
Suppose that we have reached an approximate critical point $w$
of the empirical risk and at this stage we apply a filtering step.
By the guarantees of the filter, we know that we are in one of two cases:
either the filtering step removes more outliers (corrupted functions) than inliers (in expectation),
or it certifies that the gradient of $F$ at $w$ is close to the gradient of the empirical risk at $w$.
In the former case, we make progress as we produce a ``cleaner'' set of functions. In the latter case,
since $w$ is an approximate critical point of the empirical risk, our certificate implies that
$w$ is also an approximate critical point of $F$, as desired.


For the above robust optimization
approaches to be computationally efficient, we require some assumptions
on the distribution of the clean samples (functions). With no such assumption,
most of the size of $F$ could be determined by a small fraction of the $f$'s in such a way that it is
computationally intractable to determine whether these values are due to corruptions or not.
A natural condition used in~\cite{CSV17, DiakonikolasKKLSS2018sever}
is that for every $w$ the covariance matrix of the gradient distribution, $\Cov_{f\sim \mathcal{F}}[\nabla f(w)]$,
is bounded from above. Under this condition, if one has enough samples from $\mathcal{F}$
(so that the empirical covariance of the clean samples is bounded for all $w$),
one can use the filtering algorithm of Section~\ref{ssec:filter}
to robustly estimate $\nabla F(w)$ to $\ell_2$-error $O(\sqrt{\eps})$ for any $w$.
Using either of the two afore-described approaches, one can find a point $w$ such that
$\|\nabla F(w)\|_2 = O(\sqrt{\eps})$~\cite{DiakonikolasKKLSS2018sever}. We note that~\cite{PrasadSBR2018}
uses the robust mean estimator of~\cite{LaiRV16}, which requires somewhat stronger distributional
assumptions and incurs error scaling logarithmically with the dimension.

In summary, we have described two meta-algorithms for robust stochastic optimization.
An interesting open problem is to obtain a faster algorithm for this general task, in particular
one that has information-theoretically optimal sample complexity and
uses a minimum number of queries to an ERM oracle.

\medskip


\noindent {\em Robust Linear Regression.}
As already mentioned, if $F$ is convex, the approximate critical point computed via
robust stochastic optimization translates to an approximate
global minimizer. In the following paragraphs, we describe how to obtain
an approximate global minimum for the fundamental task of linear regression.
Several other applications are given in~\cite{DiakonikolasKKLSS2018sever, PrasadSBR2018}.


We will focus on the following standard setup:
We are given a collection of labeled examples $(x^{(i)}, y^{(i)})$,
where $x^{(i)}$ is drawn from a distribution $X$ on $\R^d$
and $y^{(i)}=\beta \cdot x^{(i)}+e^{(i)}$, where $e^{(i)}$ is drawn from
a distribution $e$ that is independent of $X$, and has mean $0$ and variance $1$.
The objective is to find a vector $w^{\ast} \in \R^d$ that approximately minimizes the function $F(w) = \E_{(x, y)}[f_{(x,y)}(w)]$,
where $f_{(x,y)}(w) = (w \cdot x - y)^2$.

Recall that the robust stochastic optimization approach described in the previous
paragraphs relies on the assumption that the covariance matrix of the gradients
$\Cov_{f\sim \mathcal{F}}[\nabla f(w)]$ is bounded. This condition translates
to certain necessary conditions about the distribution $X$.
To better understand these conditions, we first consider the gradient of $f$.
We have that
$$\nabla_w f_{(x,y)}(w) = 2 (w\cdot x -y) x  = 2((w-\beta) \cdot x - e) x \;.$$
Using this expression, it is not hard to see that the variance of the gradient $\nabla_w f_{(x,y)}(w)$
in the $v$ direction is equal to
$$4\E_{x \sim X}[(v\cdot x)^2] + 4 \E_{x \sim X}[(v\cdot x)^2((w-\beta)\cdot x)^2] \;.$$
Note that the first term above is bounded, as long as $X$
has bounded covariance. To bound the second, we will need to assume
that $X$ has bounded fourth moments. In particular, if we assume that
$X$ has $\E_{x \sim X}[(v\cdot x)^4]=O(1)$ for all unit vectors $v$,
the covariance of the gradients has maximum eigenvalue
bounded by $O(1+\|w-\beta\|_2^2)$.
For simplicity, let us assume that we know a priori a ball of constant $\ell_2$-radius containing $\beta$.
Then, using our robust stochastic optimization routine,
we can efficiently compute an approximate critical point with gradient of $\ell_2$-norm
$O(\sqrt{\eps})$.

We now show that such an approximate critical point is also an approximate global minimum.
Note that the gradient of $F$ at $w$ equals $\E[2((w-\beta)\cdot x - e) x] = 2 \E[XX^T] (w-\beta).$
For convenience, we will additionally assume that $\E[XX^T]$ is bounded from below
by a constant multiple of the identity matrix. This means that for any $w \in \R^d$
we have that $\|w-\beta\|_2 = O(\|\nabla_w F(w)|\|_2)$.
Therefore, an approximate critical point of $F$ is equivalent to
a good approximation of $\beta$, which is the global minimizer of $F$.

The above immediately gives an approximate global minimizer with $\ell_2$-error $O(\sqrt{\eps})$, assuming
we started with a constant radius ball containing $\beta$. It is not difficult
to handle a very rough approximation to $\beta$ with error at most $R$.
A simple (but somewhat inefficient) method is to search
for $w$ within a ball of radius $R$ in which the covariance
of the gradients is bounded by $O(1+R^2)$. For $R>1$, this
guarantees a new point which approximates $\beta$ within $\ell_2$-error $O(\sqrt{\eps}R)$.
Iterating this procedure, we can achieve a final error of $O(\sqrt{\eps})$.
A detailed and more efficient procedure is described in~\cite{DiakonikolasKKLSS2018sever}.


Finally, it should be noted that the problem of robust linear regression has been extensively studied
in recent years. Using the Sums-of-Squares hierarchy, \cite{KlivansKM18} developed computationally efficient algorithms for robust linear regression. For the case where the covariates follow a Gaussian distribution, \cite{DKS19-lr} obtained
computationally efficient algorithms with near-minimax sample complexity and error guarantee.
There has also been recent related work~\cite{BhatiaJK15, BhatiaJKK17, SuggalaBR019},
which proposed efficient algorithms for ``robust'' linear regression
in a restrictive corruption model that only allows adversarial corruptions to the responses, but not to the covariates.

\subsection{Robust Covariance Estimation} \label{ssec:cov}

The algorithmic techniques for high-dimensional robust mean estimation
described in Section~\ref{sec:robust-mean} can be generalized to robustly estimate
higher moments under appropriate assumptions.
In this section, we describe how to adapt the filter technique to
robustly estimate the covariance matrix $\Sigma$ of a distribution $X$
satisfying appropriate moment conditions.


First note that we can assume without loss of generality that $X$ is centered.
Specifically, by considering the differences between pairs of $\eps$-corrupted
samples from $X$, we have access to a set of $2\eps$-corrupted
samples from a distribution $X'$ with mean $0$ and covariance matrix $2 \Sigma$.


The basic idea underlying this section is fairly simple: Robustly estimating
the covariance matrix of a centered random variable $X$
is essentially equivalent to robustly estimating the mean of the random variable
$Y=XX^T.$ 
That is, the problem of robust covariance estimation can be ``reduced'' to the problem
of robust mean estimation of a more complicated random variable.
If the random variable $Y$ satisfies appropriate moment conditions (or tail bounds),
we can hope to apply the techniques of the previous section.
This is the approach taken by~\cite{DKKLMS16, LaiRV16}.
At a very high-level, it is possible to design a robust covariance estimation algorithm using
the filtering method~\cite{DKKLMS16}, where each filtering step
removes points based on the empirical fourth moment tensor.


Formalizing the above approach requires some care for the following reason:
The robust mean estimation techniques for a distribution $Y$
require an {\em a priori upper bound} on its covariance $\Cov[Y]$.
Unfortunately, such bounds do not hold for our random variable $Y = X X^T$, even if
$X$ is a Gaussian distribution. To handle this issue, we need to use additional structural
properties of $X$. Specifically, if $X \sim \mathcal{N}(0, \Sigma)$,
we can leverage the fact that the covariance of $Y$ can be expressed as a function of the covariance of $X$.
An upper bound on $\Sigma$ will give us an upper bound
on the covariance of $Y$, which can then be used to obtain a better approximation of $\Sigma$.
Applying this idea iteratively will allow us to bootstrap better and better approximations,
until we end up with an approximation to $\Sigma$ with error close to the information-theoretic optimum.
This method is proposed and analyzed (with slightly different terminology)
in~\cite{DKKLMS16, DiakonikolasKKLMS18}.

Before we proceed with a detailed outline of the method, we should clarify the metric
we will use to approximate the covariance matrix. Recall that for mean estimation we
used the $\ell_2$-norm between vectors. A natural choice for the covariance is the Frobenius
norm, i.e., we would like to find an estimate $\wh{\Sigma}$ of the true matrix $\Sigma$
such that $\| \wh{\Sigma}- \Sigma \|_F$ is small.
Here we will use the Mahalanobis distance, which is affine invariant and intuitively
corresponds to multiplicative approximation. Specifically, we want to compute an estimate  $\wh{\Sigma}$ such that
$\| \Sigma^{-1/2} \wh{\Sigma} \Sigma^{-1/2}- I \|_F$ is small. A basic fact motivating the use of this metric is
that the total variation distance
between two Gaussian distributions $\mathcal{N}(0, \Sigma)$ and $\mathcal{N}(0, \wh{\Sigma})$
is bounded from above by $O(\| \Sigma^{-1/2} \wh{\Sigma} \Sigma^{-1/2}- I \|_F)$.

For the remainder of this section, we will assume for concreteness that $X \sim \mathcal{N}(0, \Sigma)$ and
we will describe an efficient algorithm for robust covariance estimation that achieves Mahalanobis distance
$O(\eps \log(1/\eps))$, which is within a logarithmic factor from the information-theoretic optimum of
$\Theta(\eps)$. We note that the same approach gives error $O(\sqrt{\eps})$ for any distribution
whose fourth moment tensor is appropriately bounded by a function of the covariance.


To get started, we first need to understand the relationship between
$\mathbf{\Sigma} := \Cov[Y]$ and $\Sigma$. To that end, let us denote by $A^{\mathrm{flat}}$
the canonical flattening of a matrix $A$ into a vector. With this notation, it is not hard to verify that
\begin{equation}\label{boldSigmaEqn}
A^{\mathrm{flat}}\mathbf{\Sigma}A^{\mathrm{flat}} = 2\left \|\Sigma^{1/2}\left(\frac{A+A^T}{2} \right)\Sigma^{1/2} \right\|_F^2 \;.
\end{equation}
This formula essentially expresses $\mathbf{\Sigma}$ in terms of the quadratic form that it defines.
An important consequence of Equation \eqref{boldSigmaEqn} is the following:
Given covariance matrices $\Sigma,\Sigma'$ and the corresponding matrices
$\mathbf{\Sigma},\mathbf{\Sigma'}$, we have that if $\Sigma \preceq \Sigma'$,
then $\mathbf{\Sigma} \preceq \mathbf{\Sigma'}$. In other words, if we have
an upper bound on the true covariance $\Sigma$, this gives us an upper bound on the covariance of $Y$.

Specifically, if $\Sigma \preceq \Sigma_0$, for some matrix $\Sigma_0$, we have that
$\Cov[\Sigma_0^{-1/2}Y\Sigma_0^{-1/2}] = O(I)$.
Using our robust mean estimator for random variables with bounded covariance
will allow us to approximate $\E[\Sigma_0^{-1/2} Y\Sigma_0^{-1/2}] = \Sigma_0^{-1/2}\Sigma \Sigma_0^{-1/2}$
to error $O(\sqrt{\eps})$ in Frobenius norm. This gives us an estimate $\wh{\Sigma}$ such that
$\| \Sigma_0^{-1/2} (\wh{\Sigma} -\Sigma) \Sigma_0^{-1/2} \|_F  = O(\sqrt{\eps})$. This means
that, given an upper bound $\Sigma_0$ on $\Sigma$, we can obtain a better one.
To obtain an initial upper bound $\Sigma_0$, we note that twice the sample covariance
of a large set of samples from $X$ provides an upper bound on the true covariance of $X$ even with corruptions,
as although corruptions can substantially increase the empirical covariance of $X$, they cannot
decrease it by much. Starting from $\Sigma_0$, we obtain a new approximation
$\Sigma_1=\Sigma+O(\sqrt{\eps})\Sigma_0$; from this we can obtain an improved approximation
$\Sigma_2=\Sigma+O(\sqrt{\eps})\Sigma_1$, and so forth. Iterating this technique
yields a matrix $\wh{\Sigma}$ such that $\|\Sigma^{-1/2}(\wh{\Sigma} - \Sigma)\Sigma^{-1/2}\|_F =O(\sqrt{\eps})$.


The error guarantee of $O(\sqrt{\eps})$ achieved above is already fairly accurate.
Once we have such a good approximation to the true covariance $\Sigma$,
we can improve the error guarantee even further by using stronger tail bounds for the Gaussian distribution.
In the following, we will assume a rough scale for $\Sigma$, namely that $I \preceq \Sigma \preceq 2I$.
Suppose that, for some $\delta>\eps$, we have a matrix $\Sigma_0$ satisfying $\|\Sigma_0-\Sigma\|_F \leq \delta$.
We can then use Equation \eqref{boldSigmaEqn} to approximate $\mathbf{\Sigma}$ from $\Sigma_0$.
It is not hard to see that if we know $\Sigma$ within Frobenius norm $\delta$, this allows us to
compute $\mathbf{\Sigma}$ within spectral norm $O(\delta)$. Thus, after applying an appropriate
linear transformation to $Y$, we obtain \new{a random variable $Y'$} with covariance within $O(\delta)$
of the identity matrix in spectral norm. We claim that \new{$Y'$} has strong tail bounds.
This follows from standard tail bounds for degree-$2$ polynomials over Gaussian random variables.
Specifically, for any unit vector $v$, $v\cdot \new{Y'}$ is a quadratic polynomial in $X$
with variance $O(1)$. Standard results imply that $v\cdot \new{Y'}$ has exponential tails.
From this we can show that any sufficiently large number of samples from $Y'$
will be $(O(\sqrt{\eps\delta}+\eps\log(1/\eps)),\eps)$-stable with high probability.
In summary, if we know $\Sigma$ to Frobenius error $\delta$,
we can use robust mean estimation techniques to learn
it to Mahalanobis error $O(\sqrt{\eps\delta}+\eps\log(1/\eps))$.
Iterating this, we can obtain a final Mahalanobis  error of $O(\eps\log(1/\epsilon))$, as desired.


We conclude by noting that the aforementioned can be used to robustly learn a
Gaussian distribution with unknown mean and covariance as follows:
First, one learns the covariance as above (by reducing to the mean $0$ case).
Then one learns the mean of a Gaussian with an (approximately) known covariance matrix.
In summary, one obtains a hypothesis Gaussian $\mathcal{N}(\wh{\mu}, \wh{\Sigma})$ within total variation
distance $O(\eps\log(1/\eps))$ from the uncorrupted distribution $\mathcal{N}(\mu, \Sigma)$.

\subsection{Robust Sparse Estimation Tasks} \label{ssec:sparse}
The task of leveraging sparsity in high-dimensional parameter estimation
is a well-studied problem in statistics. In the context of robust estimation,
this problem was first considered in~\cite{BDLS17}, which adapted
the unknown convex programming method of~\cite{DKKLMS16} described
in this article. Here we focus on robust sparse mean estimation
and describe two algorithms: the convex programming algorithm
of~\cite{BDLS17} and a novel filtering method~\cite{DKKPS19-sparse} that only
uses spectral operations.

Formally, given $\eps$-corrupted samples from $\mathcal{N}(\mu, I)$,
where the mean $\mu$ is unknown and assumed to be $k$-sparse,
i.e., supported on an unknown set of $k$ coordinates,
we would like to approximate $\mu$, in $\ell_2$-distance. Without corruptions,
this problem is easy: We draw $O(k\log(d/k)/\eps^2)$ samples and output
the empirical mean truncated in its largest magnitude $k$ entries.
The goal is to obtain similar sample complexity and error guarantees
in the robust setting.

At a high level, we note that the truncated sample mean should be accurate
as long as there is no $k$-sparse direction in which the error between
the true mean and sample mean is large. This condition can be certified,
as long as we know that the sample variance of $v\cdot X$ is close to
$1$ for all unit $k$-sparse vectors $v$. This would in turn allow us
to create a filter-based algorithm for $k$-sparse robust mean estimation that uses
only $O(k\log(d/k)/\eps^2)$ samples. While this idea naturally leads to a sample-optimal
robust algorithm for the problem, it is computationally infeasible.
This holds because the problem of determining whether there is a $k$-sparse direction with large
variance (sparse PCA) is known to be computationally hard,
even under natural distributional assumptions~\cite{BerthetR13}.
To circumvent this hardness result,~\cite{BDLS17} considers a convex relaxation of sparse PCA,
which leads to a polynomial-time version of the aforementioned algorithm
that requires $O(k^2 \log(d/k)/\eps^2)$ samples. Moreover, {there is} evidence~\cite{DKS17-sq},
in the form of a lower bound in the Statistical Query model (a restricted but powerful computational model),
that this quadratic blow-up in the sample complexity is necessary for polynomial-time algorithms.
Note that although the $O(k^2 \log(d/k)/\eps^2)$ sample complexity is worse than the information-theoretic
optimum of $\Theta(k\log(d/k)/\eps^2)$, for small $k$, it is still substantially better than
the $\Omega(d/\eps^2)$ sample size required by dense methods.

The convex-programming algorithm of~\cite{BDLS17} works as follows:
Let $\wh{\Sigma}$ be the empirical covariance matrix.
If there is a $k$-sparse unit vector $v$ with $v^T \wh{\Sigma} v$ large,
we have that $\tr(\wh{\Sigma} vv^T)$ is large. Here $vv^T$ is a positive semi-definite,
trace-$1$ matrix whose entries have $\ell_2$-norm at most $1$ and
$\ell_1$-norm at most $k$ (the latter following from the sparsity of $v$).
The work of~\cite{BDLS17} considers the following convex relaxation
of the problem of finding the sparse vector $v$:
Find a positive semi-definite, trace-$1$ matrix $H$, whose entries have $\ell_2$-norm
at most $1$ and $\ell_1$-norm at most $k$, so that $\tr(\wh{\Sigma} H)$ is as large as possible.
If the optimal solution to this convex relaxation is small, we have certified that
$\wh{\Sigma}$ has no sparse directions of large variance,
and consequently we have certified the accuracy of the truncated empirical mean.
On the other hand, if the optimal value of the convex relaxation is large, we have found a
``sparse direction'' of large variance and can use this to refine our set of samples. In particular,
\cite{BDLS17} use the ellipsoid method to find a subset of the samples
so that for all such $H$, $\tr(\wh{\Sigma} H)$ is not too large.

To bound the sample complexity of this method, one needs to show that
with a sufficiently large set $S$ of iid samples from $\mathcal{N}(\mu, I)$
the empirical covariance of $S$, $\wh{\Sigma}_S$, satisfies that
the trace $\tr(\wh{\Sigma}_S H)$ is appropriately small for all such $H$. 
One can show this as follows:
First, is is easy to see that, given a set $S$ of size $\Omega(k^2 \log(d)/\eps^2)$,
every entry of $\wh{\Sigma}_S - I$ is $O(\eps/k)$ with high probability. If this holds, then we have
that
$$
\tr(\wh{\Sigma}_S H) = \tr(H)+\tr((\wh{\Sigma}_S-I)H) = 1 + O(\eps/k)\cdot k = 1 + O(\eps) \;,
$$
where the second equality is because the entries of $\wh{\Sigma}_S-I$ have bounded
$\ell_{\infty}$-norm and the entries of $H$ have bounded $\ell_1$-norm.
Therefore, with $\Omega(k^2\log(d)/\eps^2)$ samples, this algorithm can be shown
to work with high probability.

In subsequent work, \cite{LSLC18} gave an iterative filter-based method
for robust sparse mean estimation, which avoids the use of the ellipsoid method
but still requires multiple solutions to the convex relaxation of sparse PCA in each filtering iteration.
Another algorithm for robust sparse mean estimation, proposed by
~\cite{LLC19}, works via iterative trimmed thresholding.
While this algorithm seems practically viable in terms of runtime,
it can tolerate vanishingly small fraction of outliers.

More recently, \cite{DKKPS19-sparse} developed iterative spectral algorithms
for robust sparse estimation tasks (including sparse mean estimation and sparse PCA).
These algorithms achieve the same error guarantees as~\cite{BDLS17}, while being
significantly faster. In the context or robust sparse mean estimation, the algorithm
of \cite{DKKPS19-sparse} considers the $O(k^2)$ largest entries of $\wh{\Sigma}-I$.
If the $\ell_2$-norm of these entries is much larger than $\eps$, it follows that
there is a sparse, degree-$2$ polynomial $p(x)$ where the expectation of
$p$ over all samples in $S$ is substantially different from its average value over the clean samples.
This allows us to build a filter for points based on their values under $p$.
On the other hand, if this is not the case, it means that, for sparse vectors $v$,
the contribution to $v^T(\wh{\Sigma} - I)v$ coming from entries other
than the top $O(k^2)$ ones is small. Therefore, $v^T \wh{\Sigma} v$ would
only be large if we could find such a $v$ supported only on the rows and columns of these $O(k^2)$ entries.
We can then check for all vectors $v$ on this limited support. A careful analysis
shows that with $\tilde O(k^2\log(d)/\eps^2)$ samples
the appropriate concentration conditions hold
for every $k^2$-sparse vector and degree-$2$
polynomial. This allows the appropriate filters to work with this sample size.


\subsection{List-Decodable Learning} \label{ssec:list-dec}


In this article, we focused on the classical robust statistics setting, where the outliers
constitute the minority of the dataset, quantified by the proportion of contamination
$\eps<1/2$, and the goal is to obtain estimators with error scaling
as a function of $\eps$ (and is independent of the dimension $d$).
A related setting of interest focuses on the regime that the fraction $\alpha$
of clean data (inliers) is small -- strictly smaller than $1/2$. That is, we observe
$n$ samples, an $\alpha$-fraction of which (for some $\alpha <1/2$)
are drawn from the distribution of interest, but the rest are arbitrary.

This question was first studied in the context of mean estimation in~\cite{CSV17}.
A first observation is that, in this regime, it is information-theoretically impossible
to estimate the mean with a single hypothesis. Indeed, an adversary can
produce $\Omega(1/\alpha)$ clusters of points each drawn from a
good distribution with different mean. Even if the algorithm could
learn the distribution of the samples exactly, it still would not be able to
identify which of the clusters is the correct one. To circumvent this bottleneck,
the definition of learning must be somewhat relaxed. In particular,
the algorithm should be allowed to return {\em a small list of hypotheses} with the
guarantee that \emph{at least one} of the hypotheses is close to the true mean.
This is the model of {\em list-decodable learning}, a learning model introduced by~\cite{BBV08}.
Another qualitative difference with the small $\eps$ regime
is that in list-decodable learning, it is often information-theoretically
necessary for the error to increase without bound 
as the fraction of clean data $\alpha$ goes to $0$. In summary,
given polynomially many corrupted samples, we would like to output $O(1/\alpha)$
(or $\mathrm{poly}(1/\alpha)$) many hypotheses
with the guarantee that (with high probability) at least one hypothesis
is within $f(\alpha)$ of the true mean, where $f(\alpha)$ depends
on the concentration properties of the distribution in question,
but otherwise is information-theoretically best possible.

The information-theoretic limits of list-decodable mean estimation
have only been addressed very recently. The work \cite{DiakonikolasKS18-mixtures} gave
nearly tight bounds on the minimum error achievable for list-decodable mean estimation on $\R^d$
(with $\poly(1/\alpha)$ candidate hypotheses) for structured distribution families,
including Gaussians and distributions with bounded covariance. In particular, the optimal $\ell_2$-error
was determined to be $\Theta(\sqrt{\log(1/\alpha)})$
for spherical Gaussians and $\Theta(\alpha^{-1/2})$ for bounded covariance distributions.
The algorithmic aspects of list-decodable mean estimation have turned out to be much more challenging.
For bounded covariance distributions,~\cite{CSV17} gave an SDP-based algorithm achieving
near-optimal error of $\tilde{O}(\alpha^{-1/2})$. In the rest of this section, we describe
a generalization of the filtering method for list-decodable mean estimation
introduced in~\cite{DiakonikolasKS18-mixtures}.

\paragraph{List-Decodable Mean Estimation via (Multi-) Filters.}
The filtering techniques discussed in Section~\ref{ssec:filter}
can be adapted to work in the list-decodable setting as well. For the remainder of this discussion,
we will restrict ourselves to list-decodable mean estimation when the clean data
is drawn from an identity covariance Gaussian distribution. At a high-level,
the adaptation of the filtering method works as follows:
If the sample covariance matrix has no large eigenvalues, this certifies that the true mean
and sample mean are not too far apart. However, if a large eigenvalue exists, the construction of a
filter is more elaborate. To some extent, this is a necessary difficulty because the algorithm must
return multiple hypotheses. To handle this issue, one needs to construct a {\em multi-filter},
which may return {\em several subsets} of the original dataset with the guarantee
that at least one of them is cleaner than the original. Such a multi-filter
was first introduced in~\cite{DiakonikolasKS18-mixtures}.

We now proceed with a detailed overview.
The main idea is to employ some type of filtering to obtain a subset $S$
of our original dataset \new{$T$} so that the following conditions are satisfied:
(i) The set $S$ contains at least half of the original clean samples \new{in $T$};
and (ii) The empirical covariance of $S$ is bounded from above
by some small parameter $\sigma>0$ in every direction.
If such a subset $S$ can be efficiently computed,
we can \new{certify} that the empirical mean of $S$
will be close to the true mean. To show this, let $\mu, \mu_G$, and $\mu_S$
denote the true mean of the uncorrupted distribution, the mean of the clean samples in $S$
and the mean of all the samples in $S$, respectively. By condition (i), it is easy
to see that $\|\mu-\mu_G\|_2 = O(1)$. On the other hand, the variance of $S$ in the
$\mu_G-\mu_S$ direction is at least $(\alpha/2)\|\mu_G-\mu_S\|_2^2$,
since an at least $(\alpha/2)$-fraction of clean samples have distance $\|\mu_G-\mu_S\|_2$
from the full mean. Combining with condition (ii), we have that $\|\mu_G-\mu_S\|_2=O(\sqrt{\sigma/\alpha})$,
and hence by the triangle inequality it follows that
$\|\mu-\mu_S\|_2 = O(1+\sqrt{\sigma/\alpha})$.


We can attempt to use the filtering approach of Section~\ref{ssec:filter} to
find such a set $S$. If the initial set of samples \new{$T$} has bounded covariance, then its
empirical mean works, so we can use it as our set $S$.
Otherwise, we can project the samples \new{in $T$} on a direction of large variance
in an attempt to remove outliers. Unfortunately, the outlier removal step
cannot be so straightforward in this setting.
The filtering steps we have described so far generally work by first deriving
an approximation to the true mean and then removing samples
that are too far away from it. However, the first step of this procedure inherently
fails here, since the outliers constitute the majority of the dataset.
In particular, if the initial set of samples \new{$T$} come in two large but separated clusters,
we will not be able to determine which cluster contains the true mean,
and thus will not be able to find any points that are definitively outliers.
This difficulty is of course necessary, as the list-decodable algorithm
is in general required to produce several hypotheses. To circumvent this issue,
our algorithm will return {\em multiple subsets} of points with the guarantee
that at least one of these subsets is cleaner than the original. We will call
such an algorithm a {\em multi-filter}.

Given a set $S$ of samples containing at least half
of the original clean points and a direction in which $S$ has large variance,
we want our multi-filter to return a collection of (potentially overlapping)
subsets $S_i$ of $S$ with the following properties:  First, we need it to be the case that
for \emph{at least one} of these sets, at most an $\alpha/2$-fraction
of the points in $S \setminus S_i$ are clean. Second, we need to ensure that the blowup
in the number of such subsets is not too large. One way to achieve
this is to require that $\sum_i |S_i|^2 \leq |S|^2$.

Our overall algorithm will work by maintaining several sets $S_i$ of samples.
If any of these sets has too large variance in some direction,
we will apply the multi-filter replacing it with several smaller subsets.
We note that by the first condition above, if we started with a set
where at most an $\alpha/2$-fraction of its complement was clean,
at least one of the subsets will also have this property. Therefore,
at the end of this procedure, we are guaranteed to end up with
at least one $S_i$ satisfying the conditions necessary
to give a good approximation to the true mean. Our second condition
will imply that at any stage of this algorithm, the sum of the squares
of the sizes of the $S_i$'s will never exceed the squared size of our original set of samples.
This condition guarantees that the sample complexity and runtime of the overall algorithm are polynomial.
In fact, by observing that we only need to return the sample mean of $S_i$'s
that contain at least an $\alpha/2$-fraction of our original set of samples,
we will have at most $O(1/\alpha^2)$ hypotheses.
To reduce the list size of returned hypotheses further, there is a simple method
\cite{DiakonikolasKS18-mixtures} that shows how to efficiently
reduce any set of polynomially many hypotheses (at least one of which is guaranteed
to be within $r$ of the true mean) to a list of $O(1/\alpha)$ hypotheses
at least one of which is nearly as close.

The multi-filter step works as follows: Given a direction $v$
in which the variance is too large, there are two ways we can attempt
to get an appropriate collection of subsets $S_i$. First, if there is some
interval $I$ so that all but an $\alpha/10$-fraction of samples \new{$x \in S$}
have $v\cdot x\in I$, we know almost certainly that $v \cdot \mu \in I$,
since $v\cdot \mu$ should have at least an $\alpha/4$-fraction
of samples (coming from the clean samples) on either side of it.
This implies that samples \new{$x$ whose $v$-projections are at distance}
much further than $\sqrt{\log(1/\alpha)}$ from the endpoints of $I$ are almost certainly outliers.
Using techniques similar to the ones we discussed in Section~\ref{ssec:filter},
if the variance in the $v$-direction is more than a sufficiently large \new{constant} multiple
of $|I|^2+\log(1/\alpha)$, one can find a {\em single} subset $S'$ of $S$
so that with high probability almost all points in $S \setminus S'$ are outliers.


It remains to handle the complementary case.
If for some \new{$x \in S$} we let
$S_1=\{y\in S: v\cdot y \geq x-10\sqrt{\log(1/\alpha)}\}$ and
$S_2=\{y\in S: v\cdot y \leq x+10\sqrt{\log(1/\alpha)}\}$,
it is not hard to see that at least one of $S_1$ or $S_2$ keeps
almost all of the clean samples in $S$. In particular, if
$v\cdot \mu \geq x$, then $S_1$ will only throw out clean samples
that are at least $10\sqrt{\log(1/\alpha)}$ to the left of
$\mu$ (i.e., at most an $\alpha^{10}$-fraction).
Similarly, if $v\cdot \mu\leq x$, then $S_2$ will throw away at most
an $\alpha^{10}$-fraction of clean samples. If additionally,
(i) each of $S \setminus S_1$ and $S \setminus S_2$ contain
at least an $\alpha^2$-fraction of the total samples,
and (ii) $|S_1|^2+|S_2|^2 \leq |S|^2$, then these subsets will suffice.

The key observation is that if the variance of $S$ in the $v$-direction
is more than a sufficiently large multiple of $\log(1/\alpha)$,
we can always apply at least one of the two multi-filters described above.
This holds because if we try to apply the second multi-filter
for some given value of $x$, we will find that either the fraction
of samples with $v\cdot y\leq x-10\sqrt{\log(1/\alpha)}$
is much smaller that the fraction with $v\cdot y\leq x$
or the fraction of samples with $v\cdot y\geq x+10\sqrt{\log(1/\alpha)}$
is much smaller that the fraction with $v\cdot y\geq x$. In either case,
the tails of $v\cdot S$ must decay fairly rapidly, at least until they
are smaller than $\alpha^2$. Thus, if we cannot apply this filter for any $x$,
the contribution to the variance coming from everything except the tails
must be small. On the other hand, letting $I$ be the interval excluding the
$\alpha^2$-tails on either side, we can apply the first multi-filter
if the contribution from the \new{$\alpha^2$-tails} is large. In summary,
we can always apply one of the two multi-filters unless the variance of $v\cdot S$ is small.
Overall, this algorithm outputs $\poly(1/\alpha)$ many hypotheses,
at least one of which is within $O(\sqrt{\log(1/\alpha)/\alpha})$ of the true mean.

\subsection{Robust Estimation Using High-Degree Moments} \label{ssec:higher-moments}

The algorithms presented so far robustly estimate
the mean of high-dimensional distributions by leveraging structural information
about their covariance matrix. The robust covariance estimation
algorithm of Section~\ref{ssec:cov} uses structural information about the fourth moment tensors, but
also fits in this framework, as it works by robustly estimating the mean of the random variable $XX^T$.
It is natural to ask whether (and to what extent) one can exploit structural information about {\em higher degree moments}
of the uncorrupted distribution to robustly estimate its parameters.

For the basic case that the uncorrupted distribution is a Gaussian, algorithmically
exploiting higher-degree moment information to obtain robust estimators with information-theoretically near-optimal accuracy
turns out to be manageable. For a concrete example, we focus here on the problem of robust mean estimation
for $\mathcal{N}(\mu, I)$.
Recall that the convex programming and filtering algorithms of Section~\ref{sec:robust-mean} achieve
$\ell_2$-error $O(\eps\sqrt{\log(1/\eps)})$ in the strong contamination model,
which is optimal for spherical sub-gaussian distributions
but suboptimal (up to the $O(\sqrt{\log(1/\eps)})$ multiplicative factor) for spherical Gaussians.
The reason for this discrepancy is that the stability condition of Definition~\ref{def:stability}
and the associated filter/convex programming algorithms
only rely on the first two moments of the distribution.

The work of~\cite{DKS17-sq} showed how to leverage higher-order moment information to
improve on the $O(\eps\sqrt{\log(1/\eps)})$ error bound.
Specifically, this work gave a generalized filtering algorithm that performs ``outlier removal''
based on higher-order tensor information
(of degree $d = \Omega(\log^{1/2}(1/\eps))$)
to robustly estimate the mean of $\mathcal{N}(\mu, I)$ in the strong contamination model
within $\ell_2$-error $O(\eps)$ in time $O_\eps(d^{O(\log^{1/2}(1/\eps))})$.
(This runtime upper bound is qualitatively matched by an SQ lower bound
shown in the same paper; see Section~\ref{ssec:hardness}).
This generalized filtering and its correctness analysis leverage
properties of the Gaussian distribution, including
the a priori knowledge of the higher moments and the concentration of high-degree polynomials.

Subsequently, the work \cite{DiakonikolasKS18-nasty} (see also~\cite{DK19} for the case of
discrete distributions)
used higher-order moments to robustly learn bounded
degree polynomial thresholds functions (PTFs) under various distributions.
It should be noted that the latter result does not require knowledge of all the higher degree moments.
Specifically, the algorithm of \cite{DiakonikolasKS18-nasty} requires appropriate concentration
and anti-concentration properties, and (approximate) knowledge of the moments
up to degree $2d$, where $d$ is the degree of the
underlying PTF.

The more general setting where we only have {\em upper bounds} on the higher degree moments of the
uncorrupted distribution turns out to be substantially more challenging algorithmically.
In general, upper bounds on the higher moments
imply better information-theoretic error upper bounds for robust estimation. For example,
for robust mean estimation of distributions with bounded $k$-th central moments,
the information-theoretically optimal error is easily seen to be $\Theta(\eps^{1-1/k})$. However, it is unclear
if this error bound is attainable algorithmically for $k \geq 4$.
Without any additional assumptions on the underlying distribution (beyond
the bounded higher moments condition), recent work~\cite{HL19} gave evidence that
obtaining error $o(\eps^{1/2})$ may be computationally intractable (see Section~\ref{ssec:hardness}).

However, there are circumstances in which
higher moment information can be usefully exploited.
A number of concurrent works obtained efficient algorithms
leveraging higher-degree moments to obtain
near-optimal error guarantees~\cite{DiakonikolasKS18-mixtures, HopkinsL18, KS17, KStein17, KothariSS18}.
Specifically, the work of~\cite{DiakonikolasKS18-mixtures} gave a higher-moment
generalization of the multi-filter technique described in Section~\ref{ssec:list-dec} that leads
to a near-optimal error algorithm for list-decodable mean estimation of $\mathcal{N}(\mu, I)$.
As an application,~\cite{DiakonikolasKS18-mixtures} obtained an efficient
algorithm to learn the parameters of mixtures of spherical Gaussians under near-optimal separation between
the components. The works~\cite{HopkinsL18, KS17, KStein17, KothariSS18} used the Sums-of-Squares
meta-algorithm to obtain a number of algorithmic results, including robust mean estimation
given a Sums-of-Squares proof certifying bounded central moments~\cite{HopkinsL18, KS17},
learning mixtures of spherical Gaussians~\cite{HopkinsL18, KStein17}, and list-decodable
mean estimation~\cite{KStein17} (under a similar Sums-of-Squares certifiability assumption).
More recently,~\cite{KKK19-list, RY19-list} used the Sums-of-Squares method
to obtain the first non-trivial algorithms for list-decodable linear regression.



In this section, we describe in more detail two important settings where a fairly sophisticated use of higher degree
moments is required: list-decodable mean estimation (with near-optimal error guarantees) for Gaussian distributions;
and robust mean estimation with certifiably bounded central moments.
Our presentation will mainly focus on the methodology developed by the authors in~\cite{DiakonikolasKS18-mixtures}.
We provide a high-level overview of the Sums-of-Square approach to these problems and refer the interested reader
to the recent survey~\cite{RSS18} for a more technical exposition of this approach.

\paragraph{Near-Optimal List-Decodable Gaussian Mean Estimation.}
We consider the problem of list-decodable mean estimation, assuming
the uncorrupted samples are drawn from a spherical Gaussian distribution $\mathcal{N}(\mu, I)$.
The techniques we discussed in Section~\ref{ssec:list-dec} show how to compute
a list of $O(1/\alpha)$ hypotheses (candidate mean vectors) such that (with high probability over the uncorrupted samples)
at least one hypothesis is within $\ell_2$-norm $\tilde O(1/\alpha^{1/2})$
from the true mean $\mu$. We note that this error bound is actually very
far from the information-theoretic optimum. In particular, for a point to be a reasonable hypothesis,
there must be a cluster consisting of at least an $\alpha$-fraction
of the samples that are roughly Gaussian distributed around it.
If two such hypotheses are separated by more than a large multiple
of $\sqrt{\log(1/\alpha)}$, these clusters cannot overlap on more
than an $\alpha$-fraction of their points (by Gaussian tail bounds). However,
by a simple counting argument, this implies that there cannot be more than
$\Omega(1/\alpha)$ many such hypotheses pairwise separated by
$\Omega(\sqrt{\log(1/\alpha)})$. Hence, information-theoretically,
there exists a list of $O(1/\alpha)$ many hypotheses such that (with high probability)
at least one is within distance $O(\sqrt{\log(1/\alpha)})$ of the true mean. In fact,
this upper bound is known to be tight. In~\cite{DiakonikolasKS18-mixtures}, it is shown
how to construct distributions that are consistent with many plausible true means
each separated by $\Omega(\sqrt{\log(1/\alpha)})$.

Unfortunately, while achieving better error is information-theoretically possible,
the algorithm discussed in Section~\ref{ssec:list-dec} is not able to achieve it.
There are inherent structural reasons for this:
This algorithm attempts to find subsets of the samples with small variance.
Unfortunately, small variance is not sufficient to imply better than $O(1/\alpha^{1/2})$ error.
This is because if our $\alpha$-fraction of good samples is located at distance of $\alpha^{-1/2}$
from the other samples in some direction, the variance in that direction would only be approximately $1$,
and the true mean would differ from the sample mean by about $\alpha^{1/2}$, in $\ell_2$-norm.
Another way of putting it is that this is a question about concentration.
Our existing algorithm manages to produce a set of samples, at least an $\alpha$-fraction of which
are good, which also has bounded variance.
Since bounded variance ensures some measure of concentration, this ensures that
the mean of the clean samples in this set (close to the true mean) cannot be too far
from the sample mean of this set. Unfortunately, the bounded variance condition
can only take us so far, and in fact is consistent with the mean of the good samples
being as far as $\alpha^{1/2}$ from the sample mean.

To improve our error guarantee, we want to find a set of samples with stronger concentration bounds.
A natural way to do this is to ensure that our set of samples has bounded higher moments.
Specifically, we say that a set of points $S$ has bounded \emph{$d^{th}$ central moments}
if for every unit vector $v$ we have that $\E_{x\sim_u S}[|v\cdot(x-\mu)|^d] \leq C$, for some constant $C>0$.
We note that if $S$ has bounded $d^{th}$ central moments, the mean
of any $\alpha$-fraction of the points is no more than $(C/\alpha)^{1/d}$ far from the overall mean.
Thus, if we could find a set with a large fraction of clean points which also had bounded
$d^{th}$ central moments for some $d\geq 2$, we would be able to obtain better error bounds.

The above paragraph naturally gives the outline for a better algorithm.
We start with some set $S$ of samples.
If its $d^{th}$ central moments are small, we return the mean of $S$.
Otherwise, we find some direction $v$ which has large central moments,
project onto that direction and use this information to create a multi-filter and repeat.

Unfortunately, it is not clear how to implement this algorithm efficiently.
Recent results show that determining whether a {\em generic} point set 
has bounded central moments, for $d>2$, seems to be computationally intractable~\cite{HL19}.
However, we are not dealing with arbitrary point sets. We can take advantage of the fact
that Gaussian distributions will satisfy stronger conditions on their higher moments
than simply bounded central moments, and we can design algorithms that attempt
to find sets of samples satisfying these more stringent (and hopefully computationally checkable)
conditions instead. There are two different ways to implement this idea:
the squares of polynomials method~\cite{DiakonikolasKS18-mixtures}
and the Sums-of-Squares method.

For the squares of polynomials method, we note that the $d=2$ case is easy,
as it amounts to finding the maximum value of a quadratic polynomial on a sphere
(which can be solved by spectral methods). For $d>2$, the problem requires that
we optimize a higher-degree polynomial, which is not that simple. However, if our
good samples are Gaussian with a given mean, we know what the expectations
of higher degree polynomials should be and can thus check for it. In particular,
in order to verify that the sample set has bounded $2d^{th}$ central moments,
one can check if there is any degree-$d$ polynomial $p$ where the average
value of $p^2$ is too large. This can be checked by considering
the problem of optimizing a quadratic form over polynomials $p$,
and it is sufficient by considering $p(x)=(v\cdot(x-\mu))^d$. If such a $p$ is found,
we can use it to construct a multi-filter, though the mechanism for doing
so is highly non-trivial and invokes many special properties of the Gaussian distribution.
The interested reader can find the full details in~\cite{DiakonikolasKS18-mixtures}.

The other method makes use of the Sums-of-Squares proof system.
In particular, it is shown that the Gaussian bounded central moments
can be proved in the Sum-of-Squares proof system. Thus, the goal is
to find subsets of the samples for which a similar Sums-of-Squares
proof allows one to show bounded central moments.
These techniques were developed in~\cite{HopkinsL18,KothariSS18}.

For list-decodable mean estimation, both of these methods allow
one to learn the mean of a Gaussian to $\ell_2$-error approximately $\alpha^{-1/(2d)}$
by considering the $2d^{th}$ moments. These algorithms will have runtime $\mathrm{poly}(n^d)$
and, by making $d$ super-constant, we can in fact obtain poly-logarithmic
error in quasi-polynomial time. The Sums-of-Squares
method has the advantage that it is much more general and applies
not just to Gaussians but to any distribution whose bounded central moments
can be certified by Sums-of-Squares proofs of small degree.
However, these systems must search for Sums-of-Squares proofs,
which require solving large convex optimization problems,
meaning that these algorithms will be slower in practice.

\paragraph{Robust Mean Estimation with Certifiably-Bounded Higher Moments.}
Another interesting application of the Sums-of-Squares method in this context
is in reducing the conditions needed for robust mean estimation. Recall that the
definition of stability (Definition~\ref{def:stability}) has two conditions: First, it requires
strong concentration on the uncorrupted samples, to ensure that
removing any small fraction does not substantially alter the mean
(a condition that is information-theoretically necessary in some contexts).
Second, to satisfy the definition for $\delta = o(\sqrt{\eps})$, the algorithm needs to know
the covariance matrix of the good samples. This latter condition
is not required for information-theoretic reasons, but for computational ones.
The algorithm needs to know the covariance matrix of the clean samples
so that it can detect even small increases in this covariance caused by corrupted samples.

The above implies for example, that if we have a distribution with, say, bounded fourth
central moments, the first stability condition holds
with $\delta = O(\eps^{3/4})$; while the second part will only hold
with $\delta=\Omega(\sqrt{\eps})$, as the actual covariance matrix with no errors
might be modestly far from the identity. The first condition implies
that information-theoretically, it should be possible to learn the mean
to error $O(\eps^{3/4})$ (for example by looking at a truncated mean in each direction),
but the standard filter will not get error better than $\Omega(\sqrt{\eps})$,
as an adversary can add errors to keep the covariance matrix at most the identity
and yet corrupt the sample mean by this much.

To circumvent this natural bottleneck, one would similarly need a way to take advantage of higher moments
and leverage this stronger concentration. Once again, the Sums-of-Squares method can be used to achieve this
under certain assumptions.
Roughly speaking, if the uncorrupted distribution has bounded $d^{th}$ central moments
\emph{provable by low degree Sums-of-Squares proofs}, then by searching
for sets of samples with such a proof, one can obtain error $O_d(\eps^{1-1/d})$.

\subsection{Fast Algorithms for High-Dimensional Robust Estimation} \label{ssec:fast}


The main focus of the recent algorithmic work in this field has been on obtaining
polynomial-time algorithms for various high-dimensional robust  estimation tasks.
Once a polynomial-time algorithm for a computational problem is discovered,
the natural next step is to develop faster algorithms for the problem -- with linear time as the ultimate goal.
While recent work has led to polynomial-time robust learners
for several fundamental tasks, these algorithms are significantly slower than their non-robust
counterparts (e.g., the sample average for mean estimation).
This raises the following question:
{\em Can we design robust estimators that are as
efficient as their non-robust analogues?}
In addition to its potential practical implications, making progress on this general direction
is of fundamental theoretical interest, as it can elucidate the effect of the robustness
requirement on the computational complexity of high-dimensional statistical learning.

The above direction was initiated by~\cite{ChengDR18} in the context of robust mean estimation.
More specifically,~\cite{ChengDR18} gave a robust mean estimation algorithm for bounded
covariance distributions on $\R^d$ that is near-minimax optimal,
achieves the optimal $\ell_2$-error guarantee of $O(\sqrt{\eps})$, and
runs in time $\tilde{O}(n d)/\poly(\eps)$, where $n$ is the number of samples.
That is, the algorithm of~\cite{ChengDR18} has the same (optimal) sample complexity and error
guarantee as previous polynomial-time algorithms~\cite{DKKLMS16, DKK+17}, while
running in near-linear time when the fraction of outliers $\eps$ is a small constant.
At the technical level,~\cite{ChengDR18} builds on the convex programming
approach of Section~\ref{ssec:convex-program}.

Subsequent work~\cite{DepLec19} observed that a simple preprocessing step allows one
to reduce to the case when the fraction of corruptions is a small universal constant.
As a corollary, it was shown in~\cite{DepLec19} that
a simple modification of the~\cite{ChengDR18} algorithm runs in $\tilde{O}(n d)$ time.
More importantly,~\cite{DepLec19} gave a probabilistic analysis that leads to
a fast mean estimation algorithm that is simultaneously outlier-robust and
achieves sub-gaussian tail bounds. (We note that the question of designing estimators
for the mean of heavy-tail distributions with sub-gaussian rates has been of substantial interest
recently. The interested reader is referred to~\cite{Hop18, CherapanamjeriF19, LugosiM19, LugosiM19robust})
for recent developments on this topic.) Independently and concurrently to~\cite{DepLec19},
\cite{DHL19} built on the filtering framework of Section~\ref{ssec:filter} to give
a different $\tilde{O}(n d)$ time robust mean estimation algorithm.
Moreover,~\cite{DHL19} gave an empirical evaluation
showing the practical performance of their algorithm.

Prior to the aforementioned developments, the fastest known runtime for robust mean estimation
was $\tilde{O}(n d^2)$ and was achieved by the filtering algorithm of Section~\ref{ssec:filter}.
While we did not provide a detailed runtime analysis in Section~\ref{ssec:filter}, it is not hard to show
that each filter iteration can be implemented in $\tilde{O}(n d)$ time (using power iteration)
and that the number of iterations can be bounded from above by $O(d)$.
While the filtering algorithm has been observed to run very fast in practice, taking at most a small constant number
of iterations on real datasets~\cite{DKK+17, DiakonikolasKKLSS2018sever},
one can construct examples where an $\eps$-fraction of outliers force the algorithm
to take $\Omega(d)$ iterations. This can be achieved by placing the outliers in $\Omega(d)$ orthogonal
directions at appropriate distances from the true mean.
Conceptually, the bottleneck of the filtering method is that it relies on a certificate (Lemma~\ref{lem:stability})
that allows the algorithm to remove outliers from one direction at each iteration.

A detailed explanation and comparison of the techniques in~\cite{ChengDR18, DepLec19, DHL19}
is beyond the scope of this survey. At a high-level, a conceptual commonality of these works is that they
leverage techniques from continuous optimization to develop iterative methods
(with each iteration taking near-linear time) that are able to deal with multiple directions
{\em in parallel}. In particular, the total number of iterations in each of these methods is
at most {\em poly-logarithmic} in $d/\eps$.


Beyond robust mean estimation, the work~\cite{CDGW19} recently studied
the problem of robust covariance estimation with a focus
on designing faster algorithms. By building on the techniques of~\cite{ChengDR18},
they obtained an algorithm for this problem with runtime $\tilde{O}(d^{3.26})$.
Rather curiously, this runtime is not linear in the input size, but nearly matches
the (best known) runtime of the corresponding non-robust estimator (i.e., computing the empirical covariance).
Intriguingly, \cite{CDGW19} also provided evidence that the runtime of their algorithm
may be best possible with current algorithmic techniques.


\subsection{Computational-Statistical Tradeoffs in Robust Estimation} \label{ssec:hardness}

The golden standard in robust estimation is to design computationally efficient
algorithms with optimal sample complexities and error guarantees.
A conceptual message of the recent body of algorithmic work in this area
is that robustness may not pose computational impediments to high-dimensional estimation.
Indeed, for a range of fundamental statistical tasks, recent work has developed computationally efficient
robust estimators with {\em dimension-independent} (and, in some cases, near-optimal) error guarantees.
However, it turns out that, in some settings, robustness creates {\em computational-statistical tradeoffs}.
Specifically, for several natural high-dimensional robust estimation tasks, we now have compelling evidence that
achieving, or even approximating, the information-theoretically optimal error is computationally intractable.

Progress in this direction was first made in~\cite{DKS17-sq}, which used the framework of
Statistical Query (SQ) algorithms~\cite{kea93} to establish computational-statistical trade-offs
for a range of robust estimation tasks involving Gaussian distributions.
More specifically, it was shown in~\cite{DKS17-sq}
that even for the basic problems of robust mean and covariance estimation of a high-dimensional
Gaussian with contamination in total variation distance,
achieving the optimal error requires super-polynomial time.
The same work established computational-statistical trade-offs for the problem of robust sparse mean estimation,
even in Huber's contamination model,
showing that efficient algorithms require quadratically more samples than the information-theoretic minimum.
Interestingly, both these SQ lower bounds are matched by the performance of recently developed
robust learning algorithms~\cite{DKKLMS16, BDLS17}.

Motivated by this progress,~\cite{HL19} made a first step towards proving computational lower bounds for robust mean estimation based on {\em worst-case hardness} assumptions. In particular, this work established that current algorithmic techniques for robust mean estimation may not be improvable in terms of their error guarantees, in the sense that they stumble upon a well-known computational barrier -- the so-called small set expansion hypothesis (SSE), closely related to the unique games conjecture (UGC). More recently,
~\cite{BB19} proposed a $k$-partite variant of the planted clique problem~\cite{Jerrum92} and gave a reduction,
inspired by the SQ lower bound of~\cite{DKS17-sq}, implying (subject only to the hardness of the proposed problem) a statistical--computational gap for this problem. An interesting open problem is to establish compelling evidence (e.g., in the form
of a Sums-of-Squares lower bound) of the {\em average-case hardness} of the proposed $k$-partite variant of planted clique.

\medskip

\noindent In the following paragraphs, we provide a more detailed description of~\cite{DKS17-sq, HL19}.

\paragraph{Statistical Query Lower Bounds.}
For the statistical estimation tasks studied in this article,
the input is a set of samples drawn from a probability distribution of interest.
Statistical Query (SQ) algorithms~\cite{kea93} are a restricted class of algorithms
that  are only allowed to (adaptively) query expectations of bounded
functions of the distribution -- rather than directly access samples.

In particular, if $f$ is any bounded function, one may attempt to approximate
$\E[f(X)]$ by taking samples from the distribution $X$. With $O(1/\tau^2)$ samples,
one can obtain the correct answer within additive accuracy $\tau$ with high probability.
By doing this for several different functions $f$, perhaps chosen adaptively,
one can try to learn properties of the underlying distribution $X$. In the SQ model, the algorithm
gets to ask queries (with some given accuracy $\tau$) of an oracle.
These queries are in the form of a function $f$ with range contained in $[-1,1]$ and a desired accuracy $\tau$.
The oracle then returns $\E[f(X)]$ to accuracy $\tau$, and the algorithm
gets to (adaptively) chose another $f$ up to $Q$ times. Roughly speaking, an SQ
algorithm with accuracy $\tau$ and $Q$ queries corresponds
to an actual algorithm using $O(1/\tau^2)$ samples and $O(Q)$ time.

The Statistical Query (SQ) model is actually quite powerful:
a wide range of known algorithmic techniques in machine learning are known to be implementable using SQs.
These include spectral techniques, moment and tensor methods, local search (e.g., Expectation Maximization),
and many others (see, e.g.,~\cite{Feldman13}). In fact, nearly every known statistical algorithm
(with a small number of exceptions) with provable performance guarantees can be simulated
with small loss of efficiency in the SQ model. This makes the SQ model very useful for proving lower bounds,
as a lower bound in this model applies to a broad family of algorithms.


It is easy to see that one can estimate moments and approximate medians in the SQ model.
In fact, without difficulty, various filter algorithms described in this article
can be implemented in the SQ framework. Indeed, moment computations
allow one to estimate the covariance matrix. If large eigenvalues are found,
further measurements can approximate the cumulative density distribution
of the projection and decide on a filter. From then on, measurements can
be made conditional on passing the filter (by measuring $f(x)$ times the indicator
function of $x$ passing the filter).


A recent line of work~\cite{Feldman13, FeldmanPV15, FeldmanGV15, Feldman16}
developed a framework of SQ algorithms for search problems over distributions.
One can prove unconditional lower bounds on the computational complexity of
SQ algorithms via a notion of {\em Statistical Query dimension}.
This complexity measure was introduced in~\cite{BFJ+:94} for PAC learning of Boolean functions and was recently generalized to the unsupervised setting~\cite{Feldman13, Feldman16}. A lower bound on the SQ dimension of a learning problem provides an unconditional lower bound on the computational complexity of any SQ algorithm for the problem.
Suppose we want to estimate the parameters of an unknown distribution $X$ that belongs in a known family
$\mathcal{D}$. Roughly speaking, the aforementioned work has shown that if there are many possible
distributions $X \in \mathcal{D}$ whose density functions are pairwise nearly orthogonal with respect to an appropriate inner product,
any SQ algorithm with insufficient accuracy will require many queries to determine
which of these distributions it is sampling from.

The work of~\cite{DKS17-sq} gives an SQ lower bound for the statistical task of
{\em non-Gaussian component analysis (NGCA)}~\cite{BlanchardKSSM06}. Intuitively, this is
the problem of finding a non-Gaussian direction in a high-dimensional dataset.
In more detail, let $A$ be the pdf of a univariate distribution with the property
that the first $m$ moments of $A$ match the corresponding moments of the standard univariate
Gaussian $\mathcal{N}(0, 1)$. For a unit vector $v\in \R^d$,
let $\mathbf{P}_v$ be the pdf of the distribution on $\R^d$ defined as follows:
The projection of $\mathbf{P}_v$ in the $v$-direction is equal to $A$, and $\mathbf{P}_v$
is a standard Gaussian in the orthogonal complement $v^{\perp}$.
Given sample access to a distribution $\mathbf{P}_{v^{\ast}}$, for some unknown direction (unit vector)
$v^{\ast} \in \R^d$, the goal of NGCA is to approximate the hidden vector $v^{\ast}$.
It is shown in~\cite{DKS17-sq} that any SQ algorithm to approximate $v^{\ast}$
requires either queries of accuracy $d^{-\Omega(m)}$
or exponentially many $2^{d^{\Omega(1)}}$ oracle queries.

The aforementioned SQ lower bound can be used in essentially a black-box manner
to obtain nearly tight SQ lower bounds for a range of high-dimensional estimation tasks
involving Gaussian distributions, including learning mixtures of Gaussians, robust
mean and covariance estimation, and robust sparse mean estimation. At a high-level,
this is achieved by constructing instances of these problems that amount
to an NGCA instance for an appropriate parameter $m$ of matching moments.
The main idea is to add noise to the distribution in question so that the noisy distribution
is of the form $\mathbf{P}_{v^{\ast}}$ for some moment-matching distribution $A$. Using these techniques,
it was shown that super-polynomial complexity (in terms of either number of queries or accuracy) is
required to learn the mean of an $\eps$-corrupted Gaussian to $\ell_2$-error $o(\eps\sqrt{\log(1/\eps)})$
or to learn its covariance to Frobenius error $o(\eps\log(1/\eps))$, both of which are tight~\cite{DKKLMS16}.
It was also be shown that to robustly learn the mean of a $k$-sparse Gaussian
to constant $\ell_2$-error requires either super-polynomially many queries or queries with accuracy $O(\max(1/k,1/\sqrt{d}))$,
which morally corresponds to taking at least $\Omega(\min\{d,k^2\}$ samples, nearly
matching the sample complexity of known algorithms~\cite{BDLS17}.

\paragraph{Reductions from Worst-case Hard Problems.}
Proving computational lower bounds via reductions from worst-case hard problems has proved
to be a rather challenging goal for statistical estimation tasks. Some recent progress was made on this front
in the context of robust mean estimation. Specifically, the work~\cite{HL19} established computational
lower bounds against algorithmic techniques operating along the same lines as existing ones.
In particular, existing algorithms for robust mean estimation depend on being able to find
a computationally verifiable certificate that (under appropriate conditions) implies
that the sample mean is close to the true mean (analogous to Lemma~\ref{lem:stability-full}).
It is shown in \cite{HL19} that, in some cases, finding natural certificates may be computationally intractable.

As a specific example, consider the class of distributions on $\R^d$ with bounded fourth central moments.
For such distributions, it is not hard to show that the trimmed mean correctly estimates
the mean of any one-dimensional projection within error $O(\eps^{3/4})$,
showing that this error rate is information-theoretically optimal, within constant factors.
However, for an algorithm to achieve this, using techniques like those already known,
it would need to have a way to certify whether or not the sample mean of a given point set
is close to the true mean. A natural way to do this would involve verifying whether the point set
itself has bounded fourth central moments. However, \cite{HL19} show that, subject to the Small-Set-Expansion Hypothesis,
this is computationally intractable. In fact, the certification problem remains intractable
even if one needs to distinguish between a distribution which has many bounded central moments
and one that lacks bounded fourth central moments.
While this hardness result is hardly definitive (as it leaves space for a variety of different kinds of algorithms),
it excludes some of the most natural approaches of extending existing techniques.



\section{Conclusions} \label{sec:conc}


In this article, we gave an overview of the recent developments on algorithmic aspects of high-dimensional robust statistics.
While substantial progress has been made in this field during the past few years, the results obtained so far merely scratch
the surface of the possibilities ahead. A major goal going forward is to develop a {\em general algorithmic theory
of robust estimation}. This involves (1) developing novel algorithmic techniques that lead to efficient robust estimators
for more general probabilistic models and estimation tasks; (2) obtaining a deeper understanding of the computational limits
of robust estimation; (3) developing mathematical connections to related areas, including non-convex optimization and privacy;
and (4) exploring applications of algorithmic robust statistics to exploratory data analysis,
safe machine learning, and deep learning.


One of the main conceptual contributions of the classical theory of robust statistics
has been to challenge traditional statistical assumptions about the data generating process,
thereby enabling the design of methods that are stable to deviations
from these assumptions. The precise form of such deviations depends
on the setting and can give rise to various models of robustness. We believe that a central objective
in a modern theory of robustness is to rethink old models and develop new ones that enable the design of
new practical algorithms with provable guarantees.


\section*{Acknowledgments}
We thank Alistair Stewart for extensive collaboration in this area.
We thank Tim Roughgarden, Ankur Moitra, and Greg Valiant for useful
comments on a shorter version of this article.

\bibliographystyle{plain}
\bibliography{allrefs}

\end{document}